\documentclass[letterpaper,11pt]{article}
\usepackage{graphicx} %
\usepackage[]{amsmath,amssymb,amsfonts,latexsym,amsthm,enumerate,fullpage,xcolor,cite,bbm,mathtools}
\usepackage[pagebackref,colorlinks,citecolor=blue,urlcolor=magenta,linkcolor=magenta,,bookmarks=true]{hyperref}
\usepackage[nameinlink, noabbrev, capitalize]{cleveref}
\usepackage{amsmath}
\usepackage{booktabs}
\usepackage{pgfplots}
\title{Linear Hashing Is Poissonesque}
\author{Vinayak M. Kumar\thanks{%
Simons Institute for the Theory of Computing, UC Berkeley.
\href{mailto:vmkumar@berkeley.edu}{\texttt{vmkumar@berkeley.edu}}.
Supported by a Simons Institute Research Fellowship.
A preliminary version of this work appears in the author's thesis,
written at UT Austin with support from a Jane Street Fellowship and a UT Austin Dean's Prestigious Fellowship
Supplement.}}
  
\date{}
\renewcommand{\backref}[1]{}

\renewcommand{\backrefalt}[4]{%
\ifcase #1 %
\or
[p.\ #2]%
\else
[pp.\ #2]%
\fi}

\newtheorem{theorem}{Theorem}
\newtheorem{lemma}{Lemma}
\newtheorem{proposition}{Proposition}

\newtheorem{corollary}{Corollary}
\newtheorem{definition}{Definition}

\newtheorem{remark}{Remark}

\usepackage{comment}
\usepackage{nicefrac}
\crefname{prop}{Proposition}{Propositions}
\crefname{ineq}{inequality}{inequalities}
\creflabelformat{ineq}{#2(#1)#3}

\crefname{THM}{Theorem}{Theorems}

\newcommand{\E}{\mathbb{E}}

\newcommand{\Z}{\mathbb{Z}}
\newcommand{\F}{\mathbb{F}}

\newcommand{\R}{\mathbb{R}}

\newcommand{\bars}{\bar{S}}

\newcommand{\pois}{\text{Pois}}
\newcommand{\eps}{\varepsilon}
\newcommand{\Span}{\text{Span}}

\renewcommand{\subset}{\subseteq}

\newcommand{\cE}{\mathcal{E}}

\begin{document}

\maketitle

\begin{abstract}

    The Poisson approximation theorem states that when $m$ balls are tossed uniformly and randomly into $n$ bins, the distribution of the bin loads can sometimes be approximated by $n$ independent Poisson random variables of mean $m/n$. We show that when the balls are tossed via a random $\F_2$-linear map, the load distribution is still governed by this Poisson distribution, albeit in a (necessarily) weaker manner. This allows us to show that a random linear map behaves like a fully random map in the following ways:

\begin{itemize}
    \item \textbf{Expected loads.} We show that the expected value of the $r$th largest bin is governed by the $r$th largest value among $n$ independent Poisson random variables of mean $m/n$. Specializing to the canonical setting of $m=n$ and $r=1$, our bounds show that the expected maximum load when $n$ balls are hashed into $n$ bins by a random $\F_2$-linear map is no more than the corresponding expectation for a fully random map \emph{plus} $\frac{3}2 + o(1)$, drastically improving on bounds derived by the framework of Jaber, Zuckerman, and the author (STOC 2025). Specializing to $m\gg n\log n$, we positively resolve a question of Pathegama and Barg (TIT 2025) over $\F_2$, who had asked whether the expected max load is $(1+o(1))m/n$ in this regime.

    \item \textbf{Concentration.} When $m\gg n\log (1/\rho)$ balls are tossed into $n$ bins randomly, we expect with $0.99$ probability that \emph{every} density-$\rho$ subset of bins will contain a $(1\pm o(1) )\rho$-fraction of the balls. We show that this remains the case if the $m$ balls are tossed using a random linear map, and our bound matches the $o(1)$-factor produced by the fully random case. This gives a stronger version of the Leftover Hash Lemma for linear maps. An implication of this is that all bins have load $m/n \pm O(\sqrt{(m/n)\log n})$ with probability $0.99$, which strictly improves the two-sided bounds of Dhar and Dvir (FOCS 2022) over $\F_2$, and quantitatively matches the order of deviation for constant failure probability of fully random hashing.

\end{itemize}
To obtain these results, we first show that the average of a log-potential on the bin loads is upper bounded by the cumulant generating function of a Poisson random variable of mean $m/n$. Expectation and tail bounds then easily fall out of this inequality, similar to how Chernoff tail bounds fall out of moment generating function inequalities. 

\end{abstract}
\break
\section{Introduction}

How random is a random linear map? This fundamental question has been studied for more than seventy years, perhaps originating with the question of how random linear codes compare to fully random codes \cite{elias55}. Since then, this question has been studied extensively within coding theory \cite{zp81,ghsz, ghk,cgv,Woo13,rw14,liwoo,glmrsw,agglz,dlmnrr, silas, yuanzhu, km26}, as well as in other contexts such as linear-seeded extractors \cite{dewatson,wootters2014thesis,rao09b,li-affine,chenzuck,cgl21,CL22}
and linear hash functions \cite{cw79,mcw78,mv81,alon99,babka18,dhardvir,PaBa25,jkz25,bshouty26,lund, km26}.  Although a basic object, random linear maps have been proven to be tricky to analyze in all of these contexts. The main reason is because of the copious amounts of correlations they induce. Linear maps are not even 3-wise independent, since the output of vectors $x$ and $y$ dictate the output of $x+y$. In general, any subset of linearly dependent inputs induce correlations
among their outputs. In this paper, we focus on the hashing properties of a random $\F_2$-linear map.

 Consider tossing $m$ balls independently and uniformly into $n$ bins.  An extremely useful idea, dating back to the 1970s \cite{Holst77}, was to analyze this random process using Poisson random variables. After more decades of development, this culminated in the celebrated Poisson approximation theorem \cite[Section 5.4]{mitzupfal}. This theorem states that for tail events, the distribution of the loads of the $n$ bins behaves like that of $n$ independent Poisson random variables with mean $m/n$ (denoted $\pois(m/n)^{\otimes n}$). Therefore, many monotone events and occupancy statistics can be analyzed using independent Poisson variables, and the vast literature on the Poisson distribution can be used to analyze this random process.

It is then natural to ask whether an analogous tool can be developed if $m$ balls are tossed into $n$ bins using a random linear map instead. More formally, let $S$ be an arbitrary subset of size $m$ in some extremely large $\F_2$-vector space $U$, and let $V$ be an $\F_2$-vector space of size $n$. A random linear map $h: U\to V$ consequently ``tosses'' the $m$ balls, $S$, into the $n$ bins, $V$. For each $v\in V$ we define $L_v(S,h)\coloneqq |\{s\in S: h(s) = v\}|$ to be the load of bin $v$. For \emph{any} $S$ and random linear $h$, we are asking whether \[(L_v(S,h))_{v\in V}\stackrel{?}{\approx} \pois(m/n)^{\otimes n}\] for tail events. This would imply an exact analog of the Poisson approximation theorem for random linear maps. For linearly independent $S$, the problem reduces to fully random hashing; if $S$ consists of $m$ linearly independent vectors, $h$ tosses $S$ in an independent and uniform manner and the standard Poisson approximation theorem kicks in. Unfortunately, if we make no assumptions about $S$, the approximation is false. Consider $m=n$. By a Chernoff bound, one single $\pois(1)$ variable has sharply decaying tails: $\Pr[\pois(1)\ge t]\le (e/t)^t$ \cite[Theorem 5.4]{mitzupfal}. However, Bshouty showed that when $S$ is a subspace, the load of a fixed bin has exponentially worse tails: $\Pr_h[L_0(S,h)\ge t]\ge 2^{-O(\log^2 t)}$ \cite{bshouty26}.

    Thus, asking for the same tail bounds between these two distributions is too strong, but could it be the case that coarser statistics of these two distributions are in sync?
 
\subsection{A Potential-Poisson Inequality}

We provide an analytic notion in which the load distribution of linear hashing is governed by Poisson random variables. Consider the following potential, defined for any real $\lambda$, function $h: U\to V$ with $|V|=n$, and set $S\subset U$ with $|S| = m$. \[\phi_\lambda(S,h)\coloneqq \log\left(\frac{1}n \sum_{v\in V} e^{\lambda (L_v(S,h)-m/n)}\right) \] In other words, it is the log-average-exp of the centered bin loads. Similarly to how the moment and cumulant generating functions capture statistics of its corresponding distribution, the potential function $\phi_\lambda$ captures (coarser) statistics of the load distribution of $h$. The key inequality we show is that for all real $\lambda$ and all $S\subset U$ of size $m$, the expectation of this potential, over a random linear $h$, is upper bounded by the cumulant generating function of a centered Poisson random variable with mean $m/n$. \begin{equation}\E_h\left[\phi_\lambda(S,h)\right]\le  \frac{m}n (e^\lambda -\lambda - 1)\label{eqn:potential-intro}\end{equation} The proof of this inequality is surprisingly simple. Furthermore, this inequality is quite versatile and can demonstrate various Poissonesque behaviors of the bin loads using the following three-fold recipe:

\begin{enumerate}
    \item Find statistics of the bin loads that are governed by the potential.
    \item Use \cref{eqn:potential-intro} to derive expectation bounds or tail bounds on the aforementioned statistics in terms of $\lambda$.
    \item Optimize the setting of $\lambda$ to produce the tightest bounds.
\end{enumerate}

This is analogous to how Chernoff-type bounds are derived from the moment generating function. That framework is not applicable here because the moment generating function for the loads of random linear maps can be extremely large for certain choices of $S$ (a simple example is when $S$ is a subspace). Despite this, we show that the logarithmic potential $\phi_\lambda$ can be effectively bounded, and that this potential bound can be harnessed to provide tight expectation and tail bounds on various load statistics. 

To demonstrate the versatility of this method, we instantiate this recipe to yield a collection of results. With respect to the first step of the recipe, we show that the average of the heaviest/lightest $r$ bins is governed by the potential function. Applying the second and third steps then allows us to derive sharp expectation and tail bounds on these statistics.

\subsection{Results}
We prove a variety of results on the load distribution of linear hashing. Previously, these results had their respective family of techniques. Our method provides a simplification and unification of these directions. We also believe that the following are just a few examples of this technique, and there are more applications to be discovered. We also mention that the following results hold for \emph{any} hash family for which \cref{eqn:potential-intro} holds (see the end of \cref{sec:technique} for other examples of such hash families).
\subsubsection{Expected Loads}

 Let $L_r^+(S,h)$ be the average load of the $r$ heaviest bins (i.e., the average of the largest $r$ values among the multiset $\{(L_v(S,h)\}_{v\in V}$), and let $L_r^-(S,h)$ be the average load of the $r$ lightest bins. Among $n$ independent $\pois(m/n)$ variables, define $T_r^+(m,n)$ and $T_r^-(m,n)$ to be what the Chernoff upper and lower tail bound predicts the $r$th largest and smallest value is, respectively.\footnote{ See \cref{def:threshold} for the formal definition.} We show the following theorem.

    \begin{theorem}
    \label{thm:main-informal}
        Let $u\ge \ell\ge 1$, $m\ge 1$, $n\coloneqq 2^\ell$, and $1\le r\le n$ be integers, and let $S\subset\F_2^{u}$ be an \emph{arbitrary} set of cardinality $m$. For a uniformly random linear $h: \F_2^u\to\F_2^\ell$,  \[T_r^-(m,n)\le \E[L_r^-(S,h)]  \le \E[L_r^+(S,h)] \le T^+_r(m,n).\]
    \end{theorem}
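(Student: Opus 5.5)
The plan is to derive both outer inequalities from the potential inequality \cref{eqn:potential-intro}, in the same way Chernoff bounds come from the moment generating function. I treat \cref{eqn:potential-intro} as already established, since it is the paper's key inequality. I also take $T_r^{\pm}(m,n)$ in \cref{def:threshold} to be the Chernoff-optimized thresholds
\[
T_r^+(m,n)=\frac mn+\inf_{\lambda>0}\frac{\log(n/r)+\frac mn(e^\lambda-\lambda-1)}{\lambda},\qquad
T_r^-(m,n)=\frac mn-\inf_{\lambda>0}\frac{\log(n/r)+\frac mn(e^{-\lambda}+\lambda-1)}{\lambda}.
\]
These are exactly the values at which a union bound over $n$ independent $\pois(m/n)$ variables, combined with the Chernoff bound, predicts $r$ exceedances. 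If the formal definition differs, the last step changes only cosmetically. The middle inequality $\E[L_r^-]\le\E[L_r^+]$ is immediate, since the average of the $r$ smallest loads never exceeds the average of the $r$ largest.

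For the upper bound, fix $S$, $h$ and $\lambda>0$, and let $R\subset V$ be a set of $r$ heaviest bins. The first step is a deterministic comparison between $L_r^+$ and $\phi_\lambda$. Convexity of $x\mapsto e^{\lambda x}$ (Jensen over the uniform distribution on $R$) gives
\[
e^{\lambda(L_r^+(S,h)-m/n)}\le \frac1r\sum_{v\in R}e^{\lambda(L_v-m/n)}\le \frac1r\sum_{v\in V}e^{\lambda(L_v-m/n)}=\frac nr\,e^{\phi_\lambda(S,h)}.
\]
Taking logarithms and dividing by $\lambda>0$ yields
\[
L_r^+(S,h)\le \frac mn+\frac{\log(n/r)+\phi_\lambda(S,h)}{\lambda}.
\]
Now take expectation over $h$ and apply \cref{eqn:potential-intro}. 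This gives $\E[L_r^+]\le \frac mn+\frac{\log(n/r)+\frac mn(e^\lambda-\lambda-1)}{\lambda}$ for every $\lambda>0$. Since the left side does not depend on $\lambda$, we may take the infimum over $\lambda$, which gives $T_r^+(m,n)$.

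The lower bound is symmetric, using negative $\lambda$. Write $\lambda=-\mu$ with $\mu>0$ and let $R$ be a set of $r$ lightest bins. The same Jensen argument gives $e^{-\mu(L_r^--m/n)}\le \frac nr e^{\phi_{-\mu}}$, so $L_r^-\ge \frac mn-\frac{\log(n/r)+\phi_{-\mu}}{\mu}$. This step uses that \cref{eqn:potential-intro} holds for all real $\lambda$, including negative values. Taking expectations and applying it at $\lambda=-\mu$ gives $\E[L_r^-]\ge \frac mn-\frac{\log(n/r)+\frac mn(e^{-\mu}+\mu-1)}{\mu}$, and optimizing over $\mu$ gives $T_r^-(m,n)$.

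Given \cref{eqn:potential-intro}, there is no real obstacle. The only care needed is to move the expectation past the optimization in the right order: bound pointwise for each fixed $\lambda$, take expectations, and only then optimize. Optimizing inside the expectation would require the false step $\E[\inf]\ge\inf\E$. Matching the resulting expression to the exact form of \cref{def:threshold} is routine. The substantive difficulty lies entirely in \cref{eqn:potential-intro} itself, which this proof uses as a black box.
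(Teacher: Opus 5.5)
Your proof is correct and is essentially the paper's argument. The Jensen comparison over the $r$ heaviest or lightest bins is \cref{clm:load-potential}; bounding pointwise for fixed $\lambda>0$ or $\lambda<0$, taking expectations, and applying the potential inequality is \cref{cor:exp}. Your Chernoff-optimized thresholds also coincide with the KL-divergence definition in \cref{def:threshold}. For the matching step you defer, the paper plugs in $\lambda=\log(t^\pm/\mu)$, which together with $D(t^\pm,\mu)=\log(n/r)$ collapses the bound to exactly $t^\pm$, with the cases $r=n$ and $\mu\le\log(n/r)$ handled separately.
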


In other words, the the average loads of the $r$ heaviest and $r$ lightest bins are governed by the corresponding quantities over $n$ independent $\pois(m/n)$ variables. The proof is a straightforward execution of our recipe.

To concretely display how strong these expectation bounds are, we specialize to the canonical setting of $m=n$ and $r=1$. Jaber, Kumar, and Zuckerman recently showed that when hashing $n$ balls into $n$ bins, random linear maps have expected maximum load
$O(\log n/\log\log n)$ \cite{jkz25}. This matches fully random hashing up to a constant factor in expectation and resolves a question of Alon, Dietzfelbinger, Miltersen,
Petrank, and Tardos \cite{jkz25,alon99}. Bshouty subsequently refined this approach and obtained an upper bound matching the dominant second-order term of fully random hashing \cite{bshouty26}.\footnote{The precise quantity achieved is $\frac{\log n}{\log\log n}\left(1 + (1+o(1))\frac{\log\log\log n}{\log\log n}\right)$.} 
A corollary of \cref{thm:main-informal} is that the expected max load when $n$ balls are tossed into $n$ bins via a random $\F_2$-linear map is at most the corresponding fully random expectation \emph{plus} 1.5.

\begin{theorem}
\label{thm:headline}
     Let $u\ge \ell\ge 1$ be integers, and $n\coloneqq 2^\ell$.  Let $h: \F_2^u\to\F_2^\ell$ be a uniformly random linear map, and let $f: [n]\to[n]$ be a fully random map. There exists $c_\ell\in [0,1)$ such that for every $S\subset\F_2^{u}$ of cardinality $n$, \[\E\left[\max_{y\in \F_2^\ell }|h^{-1}(y)\cap S|\right] \le \E\left[\max_{y\in [n] }|f^{-1}(y)|\right] + \frac{1}2 + c_\ell + o_\ell(1).\] 

Furthermore, there is an increasing sequence $(\ell_j)$ such that $c_{\ell_j}\to 0$.     
\end{theorem}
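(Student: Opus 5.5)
We apply \cref{thm:main-informal} with $m=n=2^\ell$ and $r=1$. Since $L_1^+(S,h)$ is exactly the maximum load, this gives
\[\E\Big[\max_y |h^{-1}(y)\cap S|\Big]\le T_1^+(n,n).\]
So it suffices to compare the Chernoff threshold $T_1^+(n,n)$ with $M_n\coloneqq\E[\max_{y}|f^{-1}(y)|]$ for fully random $f$.

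We expect (pending \cref{def:threshold}) that $T_1^+(n,n)$ is essentially the value $t$ solving the Chernoff equation for one $\pois(1)$:
\[n\cdot e^{-1}(e/t)^t = 1, \qquad\text{equivalently}\qquad t\log(t/e)+1=\log n.\]
Let $t^*=t^*(n)$ denote the real solution. The plan has three steps.

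\textbf{Step 1: upper bound.} Show that $T_1^+(n,n)\le t^*+O(1/\log n)$, or some similarly explicit bound.

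\textbf{Step 2: lower bound on $M_n$.} Use the classical Poissonization argument, by second moment or via the Poisson approximation theorem for monotone events. The probability that a given bin has load at least $k$ is about $\frac{e^{-1}}{k!}$. The max load is at least $k$ with probability $1-o(1)$ whenever $n/k!\to\infty$. Hence $M_n\ge k_n-o(1)$, where $k_n$ is the largest integer with $n\,\Pr[\pois(1)=k_n]\ge \omega(1)$. One can be sharper: the max load concentrates on at most two consecutive values, and $M_n\ge \lfloor\tau\rfloor-o(1)$, where $\tau$ solves $n e^{-1}/\tau!=1$ with $\tau!$ read as $\Gamma(\tau+1)$.

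\textbf{Step 3: compare the two equations.} Stirling gives $\tau!\approx (\tau/e)^\tau\sqrt{2\pi\tau}$, so the equations for $\tau$ and $t^*$ differ by the factor $\sqrt{2\pi \tau}$. Since the derivative of $t\log(t/e)$ is $\log t$, this shifts the root by
\[t^*-\tau=\frac{\tfrac12\log(2\pi\tau)}{\log\tau}+o(1)=\tfrac12+o(1).\]
This is exactly the source of the $\tfrac12$ in the statement. Then
\[T_1^+\le t^*+o(1)=\tau+\tfrac12+o(1)\le M_n+\tfrac12+\{\tau\}+o(1).\]
We define $c_\ell$ to be the fractional part $\{\tau(2^\ell)\}$, up to whatever rounding is built into the relationship between $M_n$ and $\lfloor\tau\rfloor$. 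This gives $c_\ell\in[0,1)$ and the first claim.

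\textbf{Subsequence.} It remains to show that $\{\tau(2^\ell)\}$ gets arbitrarily close to $0$ along a subsequence. Two facts do this. First, $\tau(2^\ell)\to\infty$. Second, its increments $\tau(2^{\ell+1})-\tau(2^\ell)\approx \log 2/\log\tau\to 0$. A sequence that tends to infinity with vanishing increments must pass near every integer. So for each integer $N$, taking the first $\ell$ with $\tau(2^\ell)\ge N$ gives fractional part at most the increment, which is $o(1)$. This produces the increasing sequence $(\ell_j)$.

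\textbf{Main obstacle.} The hardest step is making Step 2 precise enough to lose only $o(1)$. This means showing $M_n\ge\lfloor\tau\rfloor-o(1)$ with the correct rounding convention, uniformly as $n$ varies, including near the boundary where $n/\lfloor\tau\rfloor!$ is only of order constant. We would handle it as follows:
\begin{itemize}
\item Write $M_n\ge \sum_{k\le K}\Pr[\max\ge k]$.
\item Bound each $\Pr[\max< k]$ by $\exp(-\Theta(n/k!))$, using negative association of bin loads (or Poissonization with a $\sqrt{n}$ loss, which is harmless here).
\item Absorb any constant-probability loss at the top level into the definition of $c_\ell$. The rounding slack in Step 1, arising from how \cref{def:threshold} discretizes the threshold, is absorbed into $c_\ell$ in the same way.
\end{itemize}
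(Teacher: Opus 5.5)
Your overall architecture matches the paper's. You bound the linear-hash maximum by $T_1^+(n,n)$, which is exactly your $t^*$ because $D(t,1)=t\log t-t+1$, so Step 1 is vacuous. You lower-bound the fully random maximum via Poisson approximation plus Stirling, read off the $\frac12$ from the $\sqrt{2\pi s}$ factor, and get the subsequence from a sequence tending to infinity with vanishing increments.

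\textbf{The gap: your $c_\ell$ and your subsequence.} The sharpened claim $M_n\ge\lfloor\tau\rfloor-o(1)$ fails whenever $\{\tau\}=O(1/\log\tau)$. Write $N=\lfloor\tau\rfloor$.
\begin{itemize}
\item The expected number of bins with load $\ge N$ is about $ne^{-1}/N!=\Gamma(\tau+1)/\Gamma(N+1)\approx N^{\{\tau\}}=O(1)$.
\item So the maximum load is below $N$ with probability bounded away from $0$ (that count is asymptotically Poisson with bounded mean).
\item It reaches $N+1$ only with probability $O(1/N)$, so $M_n\le N-\Omega(1)$.
\end{itemize}
Your subsequence, the first $\ell$ with $\tau(2^\ell)\ge N$, has $\{\tau\}\lesssim\log 2/\log\tau$. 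It therefore lies exactly in this bad window. The constant-probability loss you propose to absorb into $c_\ell$ is then $\Omega(1)$ precisely along $(\ell_j)$, so $c_{\ell_j}\not\to 0$.

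This is not just bookkeeping. Since $T_1^+=\tau+\frac12+O(1/\log\tau)$, on those $\ell$ we have $T_1^+-M_n\ge\frac12+\Omega(1)$. No argument starting from $\E[\max]\le T_1^+$ can give $c_{\ell_j}\to0$ along that subsequence.

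\textbf{The missing idea: a slowly vanishing margin inside the floor.} Take $\eta_\ell\to 0$ with $\eta_\ell\log t\to\infty$; the paper uses $\eta=(\log t)^{-1/2}$. Set $s=\lfloor t-\frac12-\eta\rfloor$ and $c_\ell=\{t-\frac12-\eta\}$. By Stirling and \cref{prop:kl-est},
\[
n\Pr[X_1\ge s]\ \ge\ \frac{e^{D(t,1)-D(s,1)}}{2\sqrt{\pi s}}\ \ge\ \frac{s^{\eta}}{2\sqrt{\pi}}\ =\ \omega(\log t).
\]
Hence $\Pr[\max<s]\le 2e^{-\omega(\log t)}$, and $M_n\ge s-o(1)$ holds for every $\ell$, with nothing to absorb. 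Now take first crossings of $d_\ell=t_\ell-\frac12-\eta_\ell$. This still tends to infinity, and its increments are at most $\log 2/\log t_\ell+o(1)$. So $\{d_{\ell_j}\}\to 0$, while $t-\frac12$ stays at least $\eta$ above the integer. That is exactly enough room for the maximum load to reach that integer with high probability.

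Your first formulation, with $k_n$ the largest integer satisfying $n\Pr[\pois(1)=k_n]\ge\omega(1)$, is this idea in disguise. It only becomes well defined once quantified by a fixed growing function, which amounts to choosing such an $\eta$. The $\lfloor\tau\rfloor$ version throws the margin away.

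\textbf{A secondary error.} The general Poissonization bound with its $e\sqrt n$ factor is not harmless. Making $e\sqrt n\,e^{-n\Pr[X_1\ge s]}=o(1)$ requires $n\Pr[X_1\ge s]\gtrsim\frac12\log n$. That forces $s\le t-\frac32-o(1)$ and degrades the constant $\frac12$ to $\frac32$. You need the factor-$2$ version for monotone events, or negative association.
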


Hence, the expected max load of linear hashing exceeds that of fully random hashing by at most  $\frac{3}2 + o(1)$ for sufficiently large $n$, and by at most $\frac{1}2 + o(1)$ for infinitely many $n$. The fully random expectation term is unavoidable: when $u\ge n$ and $S$ consists of $n$ linearly independent vectors, $h|_S$ is a uniformly random map. It is possible that the additive term could be improved.

\cref{thm:main-informal} provides very tight max load bounds in any regime of $m$ balls and $n$ bins. For example, the constant-additive comparison holds when the number of balls is sufficiently sparse compared to the number of bins (see \cref{rem:sparse-regime}). In the dense regime of $m\gg  n \log n$ balls into $n$ bins, the expected max load for a fully random map is extremely close to $m/n$. Unfortunately, the previous expected max load upper bound in this regime was only
$16m/n$, suffering a large constant-factor loss \cite{jkz25}. This loss arises from the aforementioned
tail-bound approach: in this regime, available tail bounds are too weak to integrate and yield a sharp expectation. By bypassing tail bounds, our
argument places the expected maximum and minimum loads within second-order
correction terms of the mean. This resolves a question explicitly posed by Pathegama and Barg over $\F_2$ \cite[Remark 6]{PaBa25}.

\subsubsection{Concentration Bounds}

When $m$ balls are uniformly and independently tossed into $n$ bins, the dream scenario is if each bin ends up with exactly $m/n$ balls. When $m\ge n\log n$, a proxy for this dream happens with high probability using simple Chernoff bounds. 

\begin{table}[t]
      \centering
      \caption{
        Comparison of deviation bounds when hashing a set $S\subset\F_2^u$ of $m$ balls into $n$ bins via a random $\F_2$-linear map $h: \F_2^u\to\F_2^{\log_2 n}$. We denote $\mu\coloneqq m/n$ and $\Delta=\max_v |L_v(S,h)-\mu|$. The results marked $\dagger$ were obtained independently.
        }
    \setlength{\tabcolsep}{5pt}
    \renewcommand{\arraystretch}{1.3}
    \vspace{5mm}
    \begin{tabular*}{0.8\linewidth}{
        @{\extracolsep{\fill}}llll@{}
    }
        \toprule
        
        & Ball/bin regime
        & Bound on $\Delta$
        & Failure probability \\
        \midrule
        \multicolumn{4}{l}{\emph{Fully random maps}} \\
        \addlinespace[2pt]

        Chernoff bound
        & $\mu>\log n$
        & $\lesssim \sqrt{\mu\log n}$
        & $n^{-\Omega(1)}$ \\

        \midrule
        \multicolumn{4}{l}{\emph{Random $\F_2$-linear maps}} \\
        \addlinespace[2pt]

        \cite{dhardvir}
        & $\mu\gtrsim \frac{\log^4(m/\eps)}{\eps^8}$
        & $\le \frac{\mu}{\eps^{1/2}}$
        & $\eps$ \\

        \cite{jkz25}
        & $\mu\gtrsim \frac{\log n}{\eps^{74}}$
        & $\lesssim \frac{\mu}{\eps^{1/2}}$
        & $\eps$ \\

        \cite{dlmnrr}
        & $\mu\ge 2^{\Omega(\sqrt{u})}$
        & $\lesssim \mu^{11/12}$
        & $\lesssim \frac{1}{\mu^{1/3}} + \frac{1}n$ \\

        \addlinespace[3pt]
        \cite{lund}\textsuperscript{$\dagger$}
        & $\mu\gtrsim \frac{\log n}{\eps}$
        & $\lesssim \sqrt{\frac{\mu}{\eps}\log n}$
        & $\eps$ \\

        \textbf{This work}\textsuperscript{$\dagger$}
        & $\mu>\log n$
        & $\le 5\sqrt{\frac{\mu}{\eps}\log n}$
        & $\eps$ \\

        \bottomrule
 \end{tabular*}
    
    \label{tab:two-sided-load-bounds}
\end{table}
\begin{theorem}[Informal Chernoff Estimate]
\label{thm:random-twosided}
    Let $n\ge 1, m > n\log n$ be integers. Let $f:[m]\to [n]$ be a random function. We have \[\Pr_f\left[\forall y\in [n], \left||f^{-1}(y)| - \frac{m}n\right|\le O\left(\sqrt{\frac{m\log n}{n}}\right)\right]\ge 1-n^{-\Omega(1)}.\]
\end{theorem}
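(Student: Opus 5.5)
The plan is the textbook route: a Chernoff bound for each fixed bin, followed by a union bound over the $n$ bins. Fix $y\in[n]$ and write $X_y\coloneqq |f^{-1}(y)|=\sum_{i=1}^m \mathbbm{1}[f(i)=y]$. The indicators are independent Bernoulli$(1/n)$ variables, so $X_y\sim \mathrm{Bin}(m,1/n)$ with mean $\mu\coloneqq m/n$. I would set the target deviation to $\delta\mu$ with $\delta\coloneqq C\sqrt{\log n/\mu}$ for a constant $C$ to be chosen later. The hypothesis $m>n\log n$ says exactly that $\mu>\log n$, hence $\delta< C$. This is the point where the hypothesis is used: it keeps $\delta$ bounded, so we stay in the sub-Gaussian regime of the Chernoff bound.

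For the lower tail, I would apply the standard bound $\Pr[X_y\le(1-\delta)\mu]\le e^{-\delta^2\mu/2}=n^{-C^2/2}$. If $\delta\ge 1$ there is nothing to prove, since loads are nonnegative and the event is then trivial or impossible. For the upper tail, the multiplicative form $\Pr[X_y\ge(1+\delta)\mu]\le e^{-\delta^2\mu/(2+\delta)}$ holds for all $\delta>0$. Because $\delta<C$, the exponent is at least $\delta^2\mu/(2+C)=C^2\log n/(2+C)$, so this probability is at most $n^{-C^2/(2+C)}$.

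Next I would take a union bound over the $n$ bins. The probability that some bin deviates by more than $C\sqrt{\mu\log n}=C\sqrt{(m\log n)/n}$ is at most
\[
n\left(n^{-C^2/2}+n^{-C^2/(2+C)}\right),
\]
which is $n^{-\Omega(1)}$ once $C^2/(2+C)>1$. For example, $C=3$ gives exponents $9/2$ and $9/5$, so the total failure probability is at most $2n^{-4/5}$. This yields the stated bound with an explicit constant.

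The only step that needs any care is the upper-tail exponent. A naive use of the form $e^{-\delta^2\mu/3}$, which is valid only for $\delta\le 1$, would force $\mu\ge C^2\log n$. Using the $(2+\delta)$ form together with $\delta<C$ avoids this and requires only $\mu>\log n$. Because this is an informal benchmark statement, the $O(\cdot)$ and $\Omega(1)$ absorb these constants. For $n=1$ the statement is trivial.
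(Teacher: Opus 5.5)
Your proof is correct and is essentially the argument the paper has in mind: the paper gives no formal proof and justifies this benchmark only by ``simple Chernoff bounds,'' meaning a per-bin multiplicative Chernoff bound followed by a union bound over the $n$ bins. Your use of the $e^{-\delta^2\mu/(2+\delta)}$ upper-tail form, so that only $\mu>\log n$ is needed, is the right way to fill in the details. The only caveat is cosmetic: with $C=3$ the failure bound $2n^{-4/5}$ is vacuous for very small $n$, so take $C$ slightly larger if you want it to be meaningful there; asymptotically, the $n^{-\Omega(1)}$ absorbs this.
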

The regime $m > n\log n$ is when the deviation term starts to become smaller than the mean for a fully random map.

Can a guarantee like \cref{thm:random-twosided} be obtained with random $\F_2$-linear maps? This question was first studied in depth by Dhar and Dvir \cite{dhardvir}. They achieve the first such bounds by using probabilistic tools to reduce the problem to rank-2 Furstenberg set lower bounds, and then applying a tailored version of the polynomial method. Later works of Jaber, Kumar, and Zuckerman \cite{jkz25} and of Doron, Leonov, Mosheiff, Navas, Resch, and Ribeiro \cite{dlmnrr}, showed incomparable bounds using an exponential and quadratic potential function, respectively.

Our method provides the following bound with a simple proof.

\begin{theorem}
\label{thm:intro-twosided}
     Let $u\ge \ell\ge 1$, $n\coloneqq 2^\ell$, $m> n\log n$ be integers, and $S\subset\F_2^{u}$ be an arbitrary set of cardinality $m$. For every $\eps\in(0,1)$, a uniformly random linear map $h: \F_2^u\to\F_2^\ell$ satisfies \[\Pr\left[\forall v\in \F_2^\ell, \left|L_v(S,h) - \frac{m}n\right| \le 5\sqrt{\frac{m\log n}{\eps n}}\right]\ge 1-\eps.\]
\end{theorem}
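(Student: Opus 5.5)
Our starting point is the potential inequality \eqref{eqn:potential-intro}, applied once with $\lambda>0$ for the upper deviations and once with $\lambda<0$ for the lower deviations. Write $\mu\coloneqq m/n$ and $\Delta\coloneqq \max_v |L_v(S,h)-\mu|$. Every summand in the definition of $\phi_\lambda$ is positive, so keeping only the heaviest bin gives
\[
\phi_\lambda(S,h)\ge \lambda\bigl(\max_v L_v(S,h)-\mu\bigr)-\log n \qquad (\lambda>0).
\]
The same estimate with $-\lambda$ bounds $\mu-\min_v L_v(S,h)$. The potential can be negative, however, so before Markov's inequality can be used we need a nonnegative quantity. We therefore work with $\psi_\lambda\coloneqq \phi_\lambda+\lambda\cdot 0$ and note a convexity fact: since $\sum_v (L_v-\mu)=0$, Jensen's inequality gives $\frac1n\sum_v e^{\lambda(L_v-\mu)}\ge e^{0}=1$. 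Hence $\phi_\lambda(S,h)\ge 0$ for every $h$ and every $\lambda$. This is what makes Markov's inequality applicable.

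\textbf{Upper deviation.} Fix $\lambda>0$. Markov's inequality together with \eqref{eqn:potential-intro} gives
\[
\Pr\bigl[\phi_\lambda\ge t\bigr]\le \mu(e^\lambda-\lambda-1)/t .
\]
Set $t\coloneqq 2\mu(e^\lambda-\lambda-1)/\eps$, so this event has probability at most $\eps/2$. Off this event, the lower bound on $\phi_\lambda$ above yields
\[
\max_v L_v-\mu\le \frac{\log n+t}{\lambda}.
\]
Now restrict to $\lambda\le 1$, where $e^\lambda-\lambda-1\le \lambda^2$ (any constant bound of the form $c\lambda^2$ would do). This gives
\[
\max_v L_v-\mu\le \frac{\log n}{\lambda}+\frac{2\mu\lambda}{\eps}.
\]
The right-hand side is minimized at $\lambda=\sqrt{\eps\log n/(2\mu)}$, giving the bound $2\sqrt{2\mu\log n/\eps}\approx 2.83\sqrt{\mu\log n/\eps}$, which is within the target constant $5$. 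This choice of $\lambda$ satisfies $\lambda\le 1$ precisely because $\mu>\log n$ and $\eps<1$; this is exactly where the hypothesis $m>n\log n$ is used.

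\textbf{Lower deviation.} Apply the same argument with $-\lambda$. Here $e^{-\lambda}+\lambda-1\le \lambda^2/2$ holds for all $\lambda\ge 0$, so this side is no harder. With failure probability $\eps/2$ we get $\mu-\min_v L_v\le 2\sqrt{2\mu\log n/\eps}$ as well.

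A union bound over the two sides then gives failure probability at most $\eps$, which proves \cref{thm:intro-twosided}.

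\textbf{Main obstacle.} The nonnegativity of $\phi_\lambda$ is the step that makes Markov applicable, and it comes from Jensen's inequality combined with the centering $\sum_v(L_v-\mu)=0$. Beyond that, the only care needed is keeping constants such as the $\log$ base and the bound $e^\lambda-\lambda-1\le \lambda^2$ on $[0,1]$ consistent, so that the final constant stays below $5$. The slack is large: $2\sqrt2$ versus $5$.
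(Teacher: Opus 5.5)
Your proof is correct. Its skeleton is the paper's: \cref{thm:main-technical}, nonnegativity of $\phi_\lambda$ by Jensen, Markov at level $\eps/2$ per side, and single-bin extraction (\cref{clm:load-potential} with $r=1$). Together these are exactly \cref{cor:tail}.

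The difference is the final step. The paper optimizes $\lambda$ exactly, which produces the Poisson thresholds $T_1^\pm(2m/\eps,n)$ of \cref{thm:tail-opt}. It then estimates these via the KL integral bounds of \cref{prop:approx}, and obtains the statement as the case $\rho=1/n$ of \cref{thm:mixing}, where the deviation is bounded by $L_r^+-L_r^-$. You instead relax the Poisson cumulant generating function to a quadratic ($e^\lambda-\lambda-1\le\lambda^2$ on $[0,1]$, and $e^{-\lambda}+\lambda-1\le\lambda^2/2$) and optimize by hand. This bypasses \cref{def:threshold}, the Cram\'er transform, and \cref{prop:approx} entirely.

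Per side, the two routes give identical constants: $2\sqrt2\sqrt{\mu\log n/\eps}$ above and, after optimizing $\lambda$ separately, $2\sqrt{\mu\log n/\eps}$ below. This is because \cref{prop:approx} is itself a quadratic estimate of $D$. Your better overall constant, $2\sqrt2$ versus the paper's $2+2\sqrt2\approx 4.83$, comes only from taking the maximum of the two one-sided deviations rather than their sum $L_r^+-L_r^-$.

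What the paper's route buys is sharpness outside the dense regime. In the sparse regime (e.g.\ $m=n$, as in \cref{thm:headline}) the optimal $\lambda$ exceeds $1$ and your quadratic relaxation of $e^\lambda$ fails, while the exact thresholds stay tight. The density-$\rho$ generalization, by contrast, is available to you too: replace $\log n$ with $\log(1/\rho)$ using top-$r$ averages with $r=\rho n$.

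Two cleanups. First, the sentence claiming the potential can be negative and introducing $\psi_\lambda\coloneqq\phi_\lambda+\lambda\cdot 0$ is vacuous and is contradicted by your very next line, so delete it. Second, $\lambda\le 1$ only needs $\mu\ge\eps\log n/2$, so $m>n\log n$ is sufficient there, not ``precisely'' what is required.
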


For constant $\eps$, the deviation has the same order in $m$ and $n$ as for a fully random map. Our deviation bounds kick in once $m > n\log n$, just like as in the fully random case. The dependence on $\eps$ is worse here than for a fully random map: we achieve probability $0.99$ whereas \cref{thm:random-twosided} achieves probability $1-n^{-\Omega(1)}$. We prove that this $\eps^{-1/2}$ dependence is necessary for random linear maps (see \cref{thm:mixing-tight}). We note that independent work of Lund \cite{lund} achieves bounds of weaker, but similar shape. They follow a similar framework of Dhar and Dvir, but imports stronger Furstenberg set lower bounds established by Kumar and Mon \cite{km26}.  

 We actually show a more general phenomenon than the two-sided bounds above. Consider random linear $h: \F_2^u\to\F_2^\ell$ and arbitrary $A\subset \F_2^u$ of size $m$. If $h$ behaved like a truly random function, then for every subset $B\subset\F_2^\ell$ of density $\rho$, $h$ would map about a $\rho$-fraction of $A$ to $B$. A collision probability argument akin to that of the Leftover Hash Lemma \cite{BBR88,ill89} (e.g., a straightforward application of  \cite[Lemma 4.2]{cmv13-simplehashingonentropicsources}) provides the following concentration guarantee. 

\begin{theorem}
\label{thm:intro-density}
    Let $u\ge \ell \ge 1$, $n\coloneqq 2^\ell$ be integers, let $A\subset \F_2^u$ be of size $m$. Let $\rho\in [0,1/2]\cap \frac{1}n\Z$ and $\eps\in (0,1)$. For a uniformly random linear $h:\F_2^u\to\F_2^\ell$, we have \[\Pr_h\left[ \forall B\subset \F_2^\ell\text{ with }|B| = \rho n, \left|\Pr_{a\sim A}[h(a)\in B] - \rho\right| \le  \sqrt{\frac{ \rho n}{\eps m}}\right]\ge 1 - \eps.\]
\end{theorem}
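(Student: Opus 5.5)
The plan is a collision-probability argument in the style of the Leftover Hash Lemma, run through the $\ell_2$ distance of the load vector from uniform. Write $\mu\coloneqq m/n$ and $D\coloneqq \sum_{v}(L_v(A,h)-\mu)^2$. For any $B$ with $|B|=\rho n$,
\[\left|\Pr_{a\sim A}[h(a)\in B]-\rho\right| = \frac{1}{m}\Bigl|\sum_{v\in B}(L_v-\mu)\Bigr| \le \frac{1}{m}\sqrt{\rho n}\,\sqrt{D}\]
by Cauchy--Schwarz. This bound is uniform over all $B$ of the given size, so it suffices to control the single random quantity $D$. The target inequality holds as soon as $D\le m/\eps$, since then the right-hand side is at most $\frac1m\sqrt{\rho n m/\eps}=\sqrt{\rho n/(\eps m)}$.

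The second step is to compute $\E_h[D]$. Expanding the square gives
\[\sum_v L_v^2 = \#\{(a,a')\in A^2 : h(a)=h(a')\} = m + \sum_{a\neq a'}\mathbf 1[h(a-a')=0].\]
For a uniformly random linear $h:\F_2^u\to\F_2^\ell$ and any nonzero $x$, $h(x)$ is uniform on $\F_2^\ell$, so $\Pr[h(x)=0]=1/n$. Hence $\E\sum_v L_v^2 = m + m(m-1)/n$. Since $\sum_v L_v = m$, we have $D=\sum_v L_v^2 - m^2/n$, and therefore
\[\E[D] = m + \frac{m(m-1)}{n} - \frac{m^2}{n} = m\Bigl(1-\frac1n\Bigr) \le m.\]
Only pairwise independence of nonzero inputs is used here, which linear maps do satisfy.

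Finally, Markov's inequality gives $\Pr[D> m/\eps]\le \eps$, and on the complementary event the first step yields the bound simultaneously for every $B$. I don't expect a real obstacle; the only point to watch is that the Cauchy--Schwarz step must be applied to the centered loads over $B$ alone, which is what produces the factor $\sqrt{\rho n}$ rather than $\sqrt n$. The hypothesis $\rho\le 1/2$ is not needed in this argument; it is presumably there only to match a symmetric complement formulation.
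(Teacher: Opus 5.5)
Your proof is correct, and it is exactly the collision-probability (Leftover Hash Lemma) argument that the paper invokes via a citation without writing it out. Pairwise uniformity of $h$ on nonzero inputs gives $\E\bigl[\sum_v (L_v-\mu)^2\bigr] = m(1-1/n)\le m$, Markov's inequality controls this single random quantity, and Cauchy--Schwarz over $B$ turns it into the bound for all $B$ of size $\rho n$ simultaneously; your remark that $\rho\le 1/2$ is never used is also right, since the paper assumes it only because passing to complements reduces the general case to it.
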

    By passing from $B$ to its complement in $\F_2^\ell$, it suffices to consider $\rho\le 1/2$.
    
    For simplicity, take $\eps$ constant. If $\rho = \Omega(1)$, this is indeed the best deviation bound one can obtain. However, the bound degrades for $\rho = o(1)$, and when $\rho  = O(n/m)$, the deviation bound is larger than the average density itself. We show the following sharpening.

\begin{theorem}
\label{thm:mixing-intro}
   Let $u\ge \ell \ge 1$ and $m\ge n\coloneqq 2^\ell$ be integers. Let $\eps\in (0,1)$ and $\rho\in (0,1/2]\cap \frac{1}n\Z$. Let $A\subset \F_2^u$ be of size $m$, and let $h:\F_2^u\to\F_2^\ell$ be a random linear map. If $m\ge n\log(1/\rho)$, we have\[\Pr\left[\forall B\subset \F_2^\ell\text{ with }|B| =\rho n, \left|\Pr_{a\sim A}[h(a)\in B] - \rho\right| \le 5\rho\sqrt{\frac{n \log (1/\rho)}{\eps m}}\right]\ge 1 - \eps.\]
\end{theorem}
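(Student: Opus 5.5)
The plan is to run the three-step recipe on the statistic $L_k^{\pm}(S,h)$ with $S=A$, $k\coloneqq\rho n$ and $\mu\coloneqq m/n$. For any $B$ with $|B|=k$ we have $\Pr_{a\sim A}[h(a)\in B]=\frac{1}{m}\sum_{v\in B}L_v(A,h)$. This quantity lies between $\frac{k}{m}L_k^-(A,h)$ and $\frac{k}{m}L_k^+(A,h)$. Since $\frac{k}{m}\mu=\rho$, it suffices to show that with probability at least $1-\eps$
\[
\max\bigl(L_k^+(A,h)-\mu,\ \mu-L_k^-(A,h)\bigr)\le 5\sqrt{\mu\log(1/\rho)/\eps}.
\]
Multiplying through by $k/m=\rho/\mu$ then gives the deviation $5\rho\sqrt{n\log(1/\rho)/(\eps m)}$ claimed in the statement.

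\emph{Step 1 (the statistic is governed by the potential).} Fix $\lambda>0$ and let $T$ be the set of the $k$ heaviest bins. By convexity of $\exp$ (Jensen over the $k$ bins in $T$),
\[
e^{\lambda(L_k^+-\mu)}\le \frac1k\sum_{v\in T}e^{\lambda(L_v-\mu)}\le \frac{n}{k}\cdot\frac1n\sum_{v}e^{\lambda(L_v-\mu)}.
\]
Taking logarithms gives
\[
\lambda(L_k^+-\mu)\le \log(1/\rho)+\phi_\lambda(A,h).
\]
The symmetric argument with $\lambda<0$ and the $k$ lightest bins gives $|\lambda|(\mu-L_k^-)\le\log(1/\rho)+\phi_\lambda(A,h)$.

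\emph{Step 2 (tail bound from the potential inequality).} The key observation is that $\phi_\lambda\ge 0$ deterministically. This holds by Jensen, because the centered loads $L_v-\mu$ average to exactly $0$ over $v$. So $\phi_\lambda$ is a nonnegative random variable, and \cref{eqn:potential-intro} bounds its mean by $\mu(e^\lambda-\lambda-1)$. For $|\lambda|\le 1$ this is at most $\mu\lambda^2$. Markov's inequality then gives
\[
\Pr\bigl[\phi_\lambda\ge 2\mu\lambda^2/\eps\bigr]\le\eps/2.
\]
I apply this once with $\lambda=t$ and once with $\lambda=-t$ and take a union bound. With probability at least $1-\eps$, both one-sided deviations are at most $\frac{\log(1/\rho)}{t}+\frac{2\mu t}{\eps}$.

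\emph{Step 3 (optimize).} Choose $t=\sqrt{\eps\log(1/\rho)/(2\mu)}$. The hypothesis $m\ge n\log(1/\rho)$ means $\mu\ge\log(1/\rho)$, which guarantees $t\le 1$, so the quadratic estimate from Step 2 is valid. The resulting bound is
\[
2\sqrt{2}\sqrt{\mu\log(1/\rho)/\eps}\le 5\sqrt{\mu\log(1/\rho)/\eps},
\]
as required. The only delicate point is Step 2. Markov cannot be applied to $\phi_\lambda$ unless we know it is nonnegative, and the Jensen observation supplies exactly that. We also must not try to bound the moment generating function directly, since it can be huge for structured $A$. With those two points handled, the whole argument is a routine Chernoff-style optimization, and one fixed $t$ handles all $B$ simultaneously because Step 1 bounds the extremal statistic $L_k^\pm$ rather than each $B$ separately.
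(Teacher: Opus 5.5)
Your proof is correct and follows the paper's route: the same reduction to $L^{\pm}_{\rho n}$, the load-potential bound of \cref{clm:load-potential}, and Markov's inequality on the nonnegative $\phi_{\pm\lambda}$ with $\eps/2$ per side plus a union bound. The only difference is in the last step. You choose $\lambda$ using the sub-Gaussian estimate $e^\lambda-\lambda-1\le\lambda^2$ for $|\lambda|\le 1$, instead of the exact Poisson thresholds $T_r^\pm(2m/\eps,n)$ from \cref{thm:tail-opt} combined with \cref{prop:approx}; and because you bound the maximum of the two one-sided deviations rather than $L_r^+-L_r^-$, your constant is $2\sqrt{2}$ rather than the paper's $2+2\sqrt{2}$.
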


Similarly to \cref{thm:intro-twosided}, for $\eps$ constant, the deviation bound matches the guarantees of a fully random map up to constants. While the deviation bound for a fully random map has a much better dependence on $\eps$, the $\eps^{-1/2}$ dependence can be shown to be nearly optimal by taking $A$ to be linearly independent shifts of a subspace (see \cref{thm:mixing-tight}). 

We remark that plugging in $\rho = 1/n$ into \cref{thm:mixing-intro} yields \cref{thm:intro-twosided}.

\subsection{Proof Technique}
\label{sec:technique}
We focus on how our framework gives a simpler proof of optimal expected max load, and compare it to previous approaches.
 
    First consider random \emph{surjective} linear maps. We follow a structure similar to previous potential-based approaches to random linear codes
and linear hashing \cite{ghsz,liwoo,jkz25,bshouty26}, where we
inspect how a potential evolves as the hash function is
revealed ``one kernel vector at a time''. This potential serves as a proxy for the max load. We would thus like to show the final potential is likely small, which then implies the max load is likely small.
The hashing analyses of \cite{jkz25,bshouty26} use an exponential potential $\Phi_i$
satisfying the conditional squaring inequality
\[
    \E[\Phi_{i+1}\mid v_{\le i}]\le \Phi_i^2,
\]
where $v_{\le i}$ denotes the kernel vectors revealed so far.
This readily gives an existence argument: at each step, some
choice of the next kernel vector makes the potential at most
square. Hence, kernel vectors can be fixed such that $\Phi_i\le \Phi_0^{2^i}$. However, passing from this existence argument to a high probability bound requires additional work. It is tempting to compose the squaring recurrence to deduce an expectation bound of $\E[\Phi_i]\le \Phi_0^{2^i}$, but this ``freshman's dream'' is false in general. One can construct a sequence of three variables satisfying conditional squaring in which the third random variable has infinite expectation. The most technically involved
part of the arguments in
\cite{liwoo,jkz25,bshouty26} was to use conditional squaring, along with other strong monotonicity properties of these potentials, to establish high probability guarantees on the final potential.

Departing from these works, we instead use the logarithm of
the exponential potential. Setting $\phi_i\coloneqq\log\Phi_i$, conditional squaring and Jensen's inequality give
\[
    \E[\phi_{i+1}\mid v_{\le i}] \le  2\phi_i.
\]
By linearity of expectation, this linear recurrence can indeed be composed to bound $\E[\phi_i]\le 2^i \phi_0$. Hence, we obtain an expectation bound on the final potential almost immediately. Together with an estimate of the initial potential, this yields \cref{eqn:potential-intro}. The max load is bounded by a linear function of the log-potential, so the potential bound directly yields an expected max load bound. A conceptual contribution of ours is that \cref{eqn:potential-intro} can be used to obtain a broader collection of results than just expected max load.

    To remove the surjectivity assumption, we show that a broad range of load statistics on random surjective linear maps can be lifted \emph{losslessly} to unrestricted random linear maps (see \cref{app:removing-surjectivity}). This is a bit counterintuitive in light of the fact that a random linear map has some chance of mapping all the balls into a very small number of bins.

We note that the exponential potential used in previous work \cite{jkz25,bshouty26} provides stronger tail bounds when hashing $m = o(n\log n)$ balls into $n$ bins, versus using our logarithmic potentials. However, our log-potentials provide stronger tail bounds in the dense regime when $m\ge n\log n$. We also note that recent work \cite{dlmnrr} uses a \emph{quadratic} potential to show near-optimal concentration bounds in the regime $m \ge n^{1+\eps}$ and over $\F_p$ for any prime $p$. Our approach gives optimal concentration bounds for $m$ as small as $n\log n$, but only works over $\F_2$. 

   The proof of \cref{eqn:potential-intro} extends to hash families beyond random linear maps. Fix integers $k\ge 2$ and  $u\ge \ell\ge 1$, and consider the following hashing procedure mapping $[k^u]$ to $[k^\ell]$. First, start with $k^u$ bins, where each of the $k^u$ universe elements are in their own bin. For $u-\ell$ iterations, \emph{pseudorandomly} partition the bins into groups of $k$, and merge each group into one bin. A \emph{pseudorandom} partition is one where the $k-1$ groupmates of any fixed bin form a uniformly random $(k-1)$-subset of the other current bins (no independence between distinct groups are required). Each iteration reduces the number of bins by a factor of $k$, leaving $k^\ell$ bins at the end. For $k=2$, this framework includes the load distribution of uniformly random surjective $\F_2$-linear maps \cite[Section 1.3]{jkz25}. The same potential argument proves \cref{eqn:potential-intro} for any hash family generated by this procedure with $n\coloneqq k^\ell$, yielding all expectation and concentration bounds derived from that inequality. Unfortunately, this hash family class for $k > 2$ does not capture $\F_k$-linear maps.
\section{Preliminaries}

  All logarithms are base $e$ unless otherwise specified. For integer $n$ we denote $[n] = \{1,\dots, n\}$, and for real $\alpha$ we denote $\{\alpha\} = \alpha - \lfloor \alpha\rfloor$. For a list of vectors $v_1,\dots, v_k$, we will denote the tuple $v_{\le i} \coloneqq (v_1, v_2, \dots, v_i)$ and $V_i = \Span(v_{\le i})$ (by convention, $V_0 = \{0\}$). $\F_2$ is the finite field of two elements. For a set $S$, we write $s\sim S$ to denote that $s$ is sampled uniformly at random from $S$. We will use the inequality $1+x\le e^x$ for all $x$. For a set $S$, we will denote \[\Sigma^S(m) = \left\{w\in \Z_{\ge 0}^S: \sum_{s\in S} w_s = m\right\}.\]
 
\paragraph{Load Statistics} For $f: A\to B$ and $S\subset A$, define $L_b(S,f)\coloneqq |f^{-1}(b)\cap S|$ to be the load of bin $b$. For $1\le r\le |B|$, denote by $L^{(r)}(S,f)$ the $r$th element of the multiset $\{L_b(S,f)\}_{b\in B}$ when sorted in increasing order. Thus, $L^{(r)}(S,f)$ is the number of balls in the $r$th lightest bin when $f$ is used to hash $S$. We now define \[L^+_r(S,f) \coloneqq \frac{1}r\sum_{n-r < i \le n} L^{(i)}(S,f)\] and \[L^-_r(S,f) \coloneqq \frac{1}r\sum_{1\le i\le r} L^{(i)}(S,f)\] to be the average of the $r$ heaviest and $r$ lightest bins, respectively. These will be key quantities used later.
 
 \paragraph{Poisson Random Variables} We will denote $\pois(\mu)$ to be a Poisson random variable with mean $\mu$. We will denote $D(\nu,\mu)\coloneqq \nu\log(\nu/\mu) - \nu+\mu$ to be the Kullback-Leibler divergence between $\pois(\nu)$ and $\pois(\mu)$ (see, e.g., \cite{Kullback1968}), with the convention $0\log 0\coloneqq 0$. We will use the following basic estimate. 
 \begin{proposition}
\label{prop:kl-est}
    For $\mu > 0$ and $t > s>0$, \[D(t,\mu)-D(s,\mu) = \int_s^t \log(x/\mu) dx\ge  (t-s)\log(s/\mu).\]
\end{proposition}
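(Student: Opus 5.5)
The plan is to prove both the identity and the inequality in \cref{prop:kl-est} by differentiating $D(\cdot,\mu)$ in its first argument and then using monotonicity of the logarithm.

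\textbf{Step 1: the identity.} For $x>0$ the function $x\mapsto D(x,\mu)=x\log(x/\mu)-x+\mu$ is smooth. Its derivative is
\[
\frac{d}{dx}D(x,\mu)=\log(x/\mu)+1-1=\log(x/\mu).
\]
Since $0<s<t$, the segment $[s,t]$ stays inside $(0,\infty)$, so the convention $0\log 0=0$ never comes into play. The fundamental theorem of calculus then gives
\[
D(t,\mu)-D(s,\mu)=\int_s^t \log(x/\mu)\,dx .
\]

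\textbf{Step 2: the inequality.} On $[s,t]$ the integrand $\log(x/\mu)$ is increasing in $x$, so it is at least its value at the left endpoint, $\log(s/\mu)$. Integrating this pointwise bound over an interval of length $t-s$ gives
\[
\int_s^t \log(x/\mu)\,dx\ \ge\ (t-s)\log(s/\mu).
\]
This bound holds whatever the sign of $\log(s/\mu)$.

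\textbf{Main obstacle.} There is essentially none. The only care needed is to keep $s>0$, so that both the derivative and the lower bound $\log(s/\mu)$ are finite, and the hypothesis $t>s>0$ guarantees this.
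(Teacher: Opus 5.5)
Your proposal is correct and follows essentially the same approach as the paper: the paper also gets the identity from the antiderivative of $\log x$ (equivalently, your observation that $\frac{d}{dx}D(x,\mu)=\log(x/\mu)$). It then obtains the inequality implicitly from the monotonicity of $\log(x/\mu)$ on $[s,t]$, exactly as you do.
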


\begin{proof}
    Using the integral of $\log x$, we see \[D(t,\mu)-D(s,\mu) = t\log(t/\mu) - t - s\log(s/\mu) + s = \int_s^t \log(x/\mu) dx\ge (t-s)\log(s/\mu).\]
\end{proof}

\subsection{Removing Surjectivity For Free}

Intuitively, one may expect that surjectivity can be assumed in exchange for a small loss in parameters, as random linear maps are surjective with decent probability. However, to prove our extremely tight bounds, we cannot withstand even the smallest parameter degradation. We prove that bounds on functions of the loads of random surjective linear maps can be lifted to general linear maps with \emph{absolutely no} loss of parameters. We believe this lemma to be interesting in its own right.

    \begin{lemma}
\label{app:removing-surjectivity}
    Let $q$ be a prime power, and $\ell,m\ge 1$ and $1\le r\le q^\ell$ be integers. For any function $g: \F_q^u\to\F_q^\ell$ and $S\subset \F_q^u$ of size $m$, denote \[\vec{L}(S,g) \coloneqq (L_v(S,g))_{v\in \F_q^\ell}\in \Sigma^{\F_q^\ell}(m).\] Let $f: \Sigma^{\F_q^\ell}(m)\to[0,\infty)$ be an arbitrary function, and assume there exists $a\in [0,\infty)$ such that for infinitely many domain dimensions $U\ge \ell$, every $S\subset \F_q^U$ of size $m$, and uniformly random surjective linear map $h: \F_q^U\to\F_q^\ell$, we have \[\E\left[f\left(\vec{L}(S,h)\right)\right]\le a.\] Then for \emph{every} $u\ge \ell$, $S\subset\F_q^u$ of size $m$, and a uniformly random linear map $h:\F_q^u\to\F_q^\ell$, \[\E\left[f\left(\vec{L}(S,h)\right)\right]\le a.\]
    
\end{lemma}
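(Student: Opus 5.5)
The plan is to reduce the unrestricted case to the surjective case by padding the domain with many extra coordinates and then letting the padding dimension go to infinity. This costs nothing in the limit because $f$ lives on a finite set. Note that $\Sigma^{\F_q^\ell}(m)$ is finite, so $M\coloneqq \max f<\infty$. This finiteness is what makes a vanishing total-variation error harmless.

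Fix $u\ge\ell$ and $S\subset\F_q^u$ of size $m$. Choose $U\ge u$ from the infinite set of dimensions provided by the hypothesis. Let $\iota:\F_q^u\to\F_q^U$ be the embedding $x\mapsto(x,0)$ and set $S'\coloneqq\iota(S)$, which still has size $m$. Let $H:\F_q^U\to\F_q^\ell$ be a uniformly random surjective linear map. The hypothesis gives $\E[f(\vec L(S',H))]\le a$. Moreover $\vec L(S',H)=\vec L(S,H\circ\iota)$, so it suffices to compare the law of $H\circ\iota$ with that of a uniformly random linear map $h:\F_q^u\to\F_q^\ell$.

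For this comparison, let $G:\F_q^U\to\F_q^\ell$ be a uniformly random linear map (uniform $\ell\times U$ matrix). Its restriction $G\circ\iota$ consists of the first $u$ columns, so it is exactly uniform on linear maps $\F_q^u\to\F_q^\ell$. The map $H$ is distributed as $G$ conditioned on the event $\mathcal{S}$ that $G$ is surjective. Conditioning on an event of probability $p$ moves the law by at most $1-p$ in total variation. A union bound over the nonzero functionals $y\in\F_q^\ell$ (each with $y^\top G=0$ with probability $q^{-U}$) gives $1-\Pr[\mathcal S]\le q^{\ell-U}$. Hence
\[
d_{TV}\bigl(H\circ\iota,\;h\bigr)\le q^{\ell-U},
\qquad\text{so}\qquad
\E\bigl[f(\vec L(S,h))\bigr]\le \E\bigl[f(\vec L(S',H))\bigr]+Mq^{\ell-U}\le a+Mq^{\ell-U}.
\]

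Finally, let $U\to\infty$ along the infinite set of admissible dimensions. The left side does not depend on $U$, so $\E[f(\vec L(S,h))]\le a$. The only step needing care is the total-variation estimate, i.e.\ bounding the probability that a random $\ell\times U$ matrix fails to have full row rank. I expect no real obstacle there. The conceptual point is that boundedness of $f$ (from finiteness of $\Sigma^{\F_q^\ell}(m)$) makes the vanishing error disappear exactly, which gives the statement with no loss.
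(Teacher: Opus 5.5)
Your proposal is correct and matches the paper's proof step for step. You embed $\F_q^u$ into an admissible $\F_q^U$, note that restricting a uniform linear map gives a uniform linear map, and bound the probability of non-surjectivity by $q^{\ell-U}$. You then absorb that error using $M=\max f<\infty$ and let $U\to\infty$. The only cosmetic difference is that you phrase the error as a total-variation distance for the conditioned law, whereas the paper splits $\E_H[f]$ directly on the surjectivity event; your union bound over functionals also replaces the paper's product formula.
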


 Because $f$ may be an arbitrary nonnegative function of the entire
load vector, \cref{app:removing-surjectivity} applies to a wide range
of statistics. For example, it transfers bounds on expected order statistics, polynomial or exponential moments, and tail probabilities, the latter by
taking $f$ to be the indicator of the relevant event. Lower bounds for
a bounded real-valued statistic $g$ can likewise be transferred by
applying the lemma to $M-g$, where
\[ M= \max_{\vec{w}\in\Sigma^{\F_q^\ell}(m)} g(\vec{w}).\]
In this paper, we will use \cref{app:removing-surjectivity} to transfer upper bounds on a carefully designed potential of the load vector.

 We note that weaker variants of this lemma have been established in \cite{jkz25, bshouty26, km26}, but incurred losses in parameters. Our proof contains ingredients from them, along with a new ``limit trick.'' 

 \begin{proof}[Proof of \cref{app:removing-surjectivity}]
Consider arbitrary $u\ge \ell$ and set $S\subset \F_q^u$ of size $m$. There are infinitely many $U\ge u$ satisfying the lemma's assumption (call such $U$ \emph{good}). Consider some good $U$ and take an arbitrary $u$-dimensional subspace $V\leq \F_q^U$. There is an isomorphism $\iota: \F_q^{u}\to V$. Denote $T\coloneqq \iota(S)$. Since $T\subset V$, we have for any $H:\F_q^U\to\F_q^\ell$ and $y\in \F_q^\ell$ that \begin{align*}|(H|_V)^{-1}(y)\cap T|  =|H^{-1}(y)\cap V \cap T|  = |H^{-1}(y)\cap T|. \end{align*} Therefore, $\vec{L}(T, H|_V) = \vec{L}(T,H) $. For uniformly random $H$, $H|_V:V\to \F_q^\ell$ will be a uniform random linear map. Consequently, \begin{align*}\E_h\left[f\left(\vec{L}(S,h)\right)\right] = \E_H\left[f\left(\vec{L}(T, H|_V)\right)\right] = \E_H\left[f\left(\vec{L}(T, H)\right)\right]. \end{align*}

Let $\cE$ be the event that $H$ is surjective. When viewed as a $\ell\times U$ random $\F_q$-matrix, this is equal to the event all rows are linearly independent. Hence \[\Pr[\cE] = (1-q^{-U})(1-q^{-U+1})\dots (1-q^{-U+\ell-1})\ge 1 - \sum_{i=0}^{\ell-1} q^{-U+i} \ge 1-q^{-U+\ell}.\]

Let $ M\coloneqq \max_{\vec{w}\in \Sigma^{\F_q^\ell}(m)} f(\vec{w})$. We use the fact that $a$ bounds the surjective expectation to deduce
 \[\E_h\left[f\left(\vec{L}(S,h)\right)\right] = \E_H\left[f\left(\vec{L}(T,H)\right)\right] \le \Pr[\neg\cE]\cdot M + \Pr[\cE]\cdot a\le q^{-U+\ell}M + a. \]Recalling that there are infinitely many good $U\ge u$ satisfying the lemma's assumption, we can take $U\to \infty$ through an increasing sequence of good integers, yielding the desired statement.
\end{proof}

\section{The Potential Function Bound}
\label{sec:potential-bound}

For $h: \F_2^u\to\F_2^\ell$, $S\subset \F_2^u$ of size $m$, and $\mu\coloneqq m/2^\ell$, consider the following potential function, which is the logarithm of a centered version of the potential function used in prior works \cite{ghsz,liwoo,jkz25,bshouty26}.

\[\phi_\lambda(S,h)\coloneqq \log\left(\E_{y\sim \F_2^\ell} e^{\lambda(L_y(S,h) - \mu)}\right).\]

Taking logarithms is the key difference from prior works that allows us to obtain quantitatively tighter results with a simpler argument. Centering the loads gives us tighter control on concentration bounds in the regime where the number of balls is much larger than the number of bins.

We now prove that the average value of this potential is upper bounded by the cumulant generating function of a centered Poisson random variable. This is the main engine behind all our results.

\begin{theorem}
\label{thm:main-technical}
    Let $u\ge \ell\ge 1$, $m\ge 1$, $n\coloneqq 2^\ell$ be integers and denote $\mu \coloneqq m/n$. Let $h: \F_2^u\to\F_2^\ell$ be a uniformly random linear map. For any $\lambda\in \R$ and subset $S\subset\F_2^{u}$ of cardinality $m$,   \[\E_h\left[\phi_\lambda(S,h)\right]\le \mu(e^\lambda - \lambda - 1).\]
\end{theorem}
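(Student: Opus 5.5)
By \cref{app:removing-surjectivity}, it suffices to prove the bound for uniformly random \emph{surjective} linear maps $h:\F_2^U\to\F_2^\ell$, for every large $U$. One technicality: the lemma needs a nonnegative $f$, while $\phi_\lambda$ can be negative. I would handle this by applying the lemma to $f=\phi_\lambda+C$, where $C$ is the maximum of $|\phi_\lambda|$ over $\Sigma^{\F_2^\ell}(m)$. This constant is finite because $\phi_\lambda$ depends only on the load vector.

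A random surjective linear map can be generated by revealing its kernel one vector at a time. Let $k=U-\ell$. Choose $v_1,\dots,v_k$ so that each $v_{i+1}$ is uniform outside $V_i=\Span(v_{\le i})$. Then set $h_i$ to be the quotient map $\F_2^U\to\F_2^U/V_i$. The final map $h_k$ is distributed like a uniformly random surjective map composed with a fixed isomorphism onto $\F_2^\ell$, and the loads are invariant under such an isomorphism.

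For each $i$, define the centered quotient potential
\[
\Phi_i=\E_{y\sim \F_2^U/V_i}\, e^{\lambda(L_y(S,h_i)-\mu_i)},\qquad \mu_i=m/2^{U-i},
\]
and set $\phi_i=\log\Phi_i$. At the last step $\mu_k=\mu$ and $\phi_k=\phi_\lambda(S,h)$.

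The key step is a conditional squaring inequality,
\[
\E[\Phi_{i+1}\mid v_{\le i}]\le\Phi_i^2 .
\]
To prove it, note that passing to $V_{i+1}$ merges each coset $y$ with the coset $y+\bar v$. Here $\bar v$ is the image of $v_{i+1}$, which is uniform over the nonzero elements of the current quotient, so each coset's partner is uniform among the other $N-1$ cosets, where $N=2^{U-i}$. The new merged loads are $L_y+L_{y+\bar v}$, and since $\mu_{i+1}=2\mu_i$, the centered exponentials multiply. This gives
\[
\Phi_{i+1}=\frac{1}{N}\sum_y e^{\lambda(L_y-\mu_i)}e^{\lambda(L_{y+\bar v}-\mu_i)},
\]
where the factor $1/N$ accounts for each merged pair being counted twice among $N/2$ new bins. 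Averaging over $\bar v$ gives
\[
\E[\Phi_{i+1}]=\frac{1}{N(N-1)}\sum_{y\ne y'}a_ya_{y'},\qquad a_y=e^{\lambda(L_y-\mu_i)}.
\]
The off-diagonal sum is $(\sum_y a_y)^2-\sum_y a_y^2$, and Cauchy--Schwarz gives $\sum_y a_y^2\ge(\sum_y a_y)^2/N$. Hence the off-diagonal sum is at most $(1-1/N)(\sum_y a_y)^2$, so $\E[\Phi_{i+1}]\le(\sum_y a_y/N)^2=\Phi_i^2$.

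Jensen's inequality applied to the concave logarithm then gives $\E[\phi_{i+1}\mid v_{\le i}]\le 2\phi_i$. Iterating with the tower property yields $\E[\phi_k]\le 2^k\phi_0$.

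For the initial potential, $h_0$ is the identity, so the $2^U$ bins have load $1$ on the $m$ points of $S$ and load $0$ elsewhere. With $\mu_0=m2^{-U}$,
\[
\phi_0=-\lambda\mu_0+\log\bigl(1+\mu_0(e^\lambda-1)\bigr)\le\mu_0(e^\lambda-1-\lambda),
\]
using $\log(1+x)\le x$. Multiplying by $2^k$ gives $2^k\mu_0=\mu$, so $\E[\phi_k]\le\mu(e^\lambda-\lambda-1)$, which is the claimed bound for surjective maps.

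I expect the main obstacle to be the squaring step. The points to get right are that the merging partner is uniform over the other bins, which is why the Cauchy--Schwarz term has exactly the right sign, and that the normalization is correct given $\mu_{i+1}=2\mu_i$. The rest of the argument is routine.
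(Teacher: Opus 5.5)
Your proposal is correct and follows the paper's proof essentially step for step: the reduction to surjective maps via the surjectivity-removal lemma, revealing the kernel one vector at a time, the conditional squaring inequality, Jensen's inequality for the logarithm, and the bound on $\phi_0$ via $\log(1+x)\le x$. Your Cauchy--Schwarz computation in the quotient is the same argument as the paper's observation that $\Psi_{i+1}$ averages to $\Phi_i^2$ over all $v$ and is at least $\Phi_i^2$ on $V_i$. The only real difference is your constant shift $C$, which is valid but unnecessary. In fact $\phi_\lambda\ge 0$ on $\Sigma^{\F_2^\ell}(m)$ by Jensen's inequality, since the centered loads average to zero, so the lemma applies to $\phi_\lambda$ directly.
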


In order to prove this inequality, we will consider sampling a random surjective linear map ``one kernel vector at a time'', and track the potential. Consider $k$ linearly independent vectors $v_{\le k}\in (\F_2^u)^k$ and a subset $S\subset \F_2^u$. Recall $V_i = \Span(v_{\le i})$. For each $0\le i\le k$ define \begin{align*}\bars_i(x) & \coloneqq |(x+V_i)\cap S| - \frac{|S|}{2^{u-i}}\end{align*} Note that $\bars_0$ is the mean-centered indicator of $S$, and when $\F_2^u$ is hashed according to cosets of $V_i$,  $\bars_i(x)$ is the number of balls in $x$'s bin, centered by the average load.

For arbitrary $\lambda\in \R$, define the sequence of exponential potential functions \begin{equation*}\Phi_i = \Phi_i(S;\lambda; v_{\le i})\coloneqq \E_{x\sim \F_2^u}[e^{\lambda \bars_i(x)}].\end{equation*} We now show that $(\Phi_i)_{i\ge 0}$ is controlled in conditional expectation. This was shown in prior work for the uncentered potential \cite{jkz25}, and the proof is essentially the same.

\begin{lemma}[{\hspace{-.1mm}\cite[Lemma 3 modified]{jkz25}}]
\label{clm:expsquare}
    Let $v_{\le i}\in (\F_2^u)^i$ be a tuple of linearly independent vectors, and let $v_{i+1}\sim \F_2^u\setminus V_i$. We have \[\E_{v_{i+1} }[\Phi_{i+1} ] \le \Phi_i^2.\]
\end{lemma}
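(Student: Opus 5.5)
Fix the independent vectors $v_{\le i}$ and write $c_i \coloneqq |S|/2^{u-i}$ for the average load at level $i$, so that $c_{i+1}=2c_i$. The plan is to compute $\Phi_{i+1}$ coset by coset and then average over $v_{i+1}$.

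\textbf{Step 1: recursion.} For any $x$, since $v_{i+1}\notin V_i$, the coset $x+V_{i+1}$ is the disjoint union of $x+V_i$ and $x+v_{i+1}+V_i$. Hence
\[\bars_{i+1}(x)=\bars_i(x)+\bars_i(x+v_{i+1}),\]
because the centering constants add: $c_i+c_i=c_{i+1}$. Therefore
\[\Phi_{i+1}=\E_{x\sim\F_2^u}\bigl[e^{\lambda\bars_i(x)}e^{\lambda\bars_i(x+v_{i+1})}\bigr].\]

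\textbf{Step 2: average over $v_{i+1}$.} The function $g(x)\coloneqq e^{\lambda\bars_i(x)}$ is constant on cosets of $V_i$. Let $x\sim\F_2^u$ and let $v_{i+1}$ be uniform on $\F_2^u\setminus V_i$, independent of $x$. Then $x+V_i$ is a uniform coset, and $x+v_{i+1}+V_i$ is a uniformly random coset among the $N-1$ other cosets, where $N=2^{u-i}$. Writing $g$ as a function $G$ on the $N$ cosets,
\[
\E_{v_{i+1}}[\Phi_{i+1}]=\frac{1}{N(N-1)}\sum_{C\ne C'}G(C)G(C')=\frac{(\sum_C G(C))^2-\sum_C G(C)^2}{N(N-1)}.
\]

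\textbf{Step 3: compare with $\Phi_i^2$.} We have $\Phi_i^2=(\sum_C G(C))^2/N^2$. Set $A=\sum_C G(C)$ and $B=\sum_C G(C)^2$. The desired inequality becomes
\[\frac{A^2-B}{N(N-1)}\le\frac{A^2}{N^2}.\]
Multiplying out, this is equivalent to $N(A^2-B)\le (N-1)A^2$, that is, $A^2\le NB$. This is exactly Cauchy--Schwarz. The edge case $N=1$ cannot occur, since $v_{i+1}\notin V_i$ requires $i<u$, so $N\ge2$.

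\textbf{Main obstacle.} The only delicate point is Step 1. One must check that the centering is compatible with merging, namely that the centered loads add, which holds because the mean load exactly doubles. The remainder is the elementary Cauchy--Schwarz comparison between a sum over distinct pairs and the full square.
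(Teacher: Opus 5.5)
Your proof is correct and is essentially the paper's argument. The paper also starts from the merging identity $\bars_{i+1}(x)=\bars_i(x)+\bars_i(x+v_{i+1})$. It then observes two facts about $\E_x[E_i(x)E_i(x+v)]$: its average over all $v\in\F_2^u$ equals $\Phi_i^2$, and its value for $v\in V_i$ is at least $\Phi_i^2$ by Jensen. This is exactly your ``full square minus diagonal'' computation followed by Cauchy--Schwarz, phrased over $v$ rather than over cosets of $V_i$.
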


\begin{proof}

   Fix an arbitrary choice of $\lambda$. Denote the following quantities: \[E_i(x)\coloneqq e^{\lambda \bars_i(x)}, \qquad \Psi_{i+1}({v_{i+1}}) \coloneqq \E_x[E_i(x)E_i(x+{v_{i+1}})].\] Note that $\Phi_i = \E_{x\sim \F_2^u}[E_i(x)]$. For any $v \notin V_i$, $V_i$ and $V_i+v$ partition $V_{i+1}$ and so $\bars_{i+1}(x) = \bars_i(x)+\bars_i(x+v)$. Hence, for any such $v$, $E_{i+1}(x) = E_i(x)E_i(x+v)$. Thus, for ${v_{i+1}}\sim \F_2^u\setminus V_i$, \[\E_{v_{i+1}}[\Phi_{i+1}] = \E_{x,{v_{i+1}}}[E_{i+1}(x)] = \E_{x,{v_{i+1}}}[E_i(x)E_i(x+{v_{i+1}})] = \E_{v_{i+1}}[\Psi_{i+1}({v_{i+1}})]. \] Hence it suffices to show that \[\E_{v_{i+1} }[\Psi_{i+1}(v_{i+1})]  \le \Phi_i^2.\] To do so, we will first show the average of $\Psi_{i+1}$ over all of $\F_2^u$ is $\Phi_i^2$, and then show that $\Psi_{i+1}(v)\ge \Phi_i^2$ for all $v\in V_i$. By an averaging argument, this implies the average of $\Psi_{i+1}$ outside $V_i$ is at most $\Phi_i^2$ as desired.

   To prove the first claim, note that for uniform and independent $(x,v)\in (\F_2^u)^2$, so is $(x,x+v)$.  Hence, \begin{align*}\E_{v} [\Psi_{i+1}(v)]  = \E_{x,v}[E_i(x)E_i(x+{v})]  =\E_{x,{v}}[E_i(x)E_i({v})]   = \E_x[E_i(x)]\E_{v}[E_i({v})]  = \Phi_i^2 \end{align*} 
 We now show the second claim. Since $\bars_i$ is constant over cosets of $V_i$, it follows $E_i(x) = E_i(x+v)$ for any $v\in V_i$. Therefore for all $v\in V_i$, \[ \Psi_{i+1}(v) = \E_x[E_i(x)^2]  \ge (\E_x[ E_i(x)])^2 = \Phi_i^2,\]
   where the last inequality follows from Jensen's inequality. Both claims are now proven and we conclude.

\end{proof}

We are now ready to prove the main inequality.

\begin{proof}[Proof of \cref{thm:main-technical}]
     We will first assume $h$ is a uniformly random \emph{surjective} linear map, and then use \cref{app:removing-surjectivity} to remove the surjection assumption.
     Denote $k=u-\ell$. Using the vectors $v_{\le k}$ and set $S$, define $\{\bars_i(x)\coloneqq |(x+V_i)\cap S| - m/2^{u-i}\}_{0\le i\le k}$, the potentials $\{\Phi_i \coloneqq \E_{x\sim \F_2^u}[e^{\lambda \bars_i(x)}]\}_{0\le i\le k}$, and the log-potentials $\{\phi_i\coloneqq \log \Phi_i\}_{0\le i\le k}$. We note $\phi_k = \phi_\lambda(S,h)$. Therefore, proving the theorem statement is equivalent to proving \[\E[\phi_k]\le \mu(e^\lambda - \lambda - 1).\] 
    
      By Jensen's inequality and \cref{clm:expsquare}, we deduce for each $0\le i < k$, \[\E_{v_{i+1}}[\phi_{i+1}| v_{\le i}] = \E_{v_{i+1}}[\log(\Phi_{i+1})| v_{\le i}]\le \log(\E_{v_{i+1}}[\Phi_{i+1}|v_{\le i}])\le \log(\Phi_i^2) = 2\phi_i.\] Therefore, we can repeatedly apply the linearity of expectation and law of total expectation to deduce \begin{align}\E[\phi_k] = \E_{v_{\le k-1}}[\E_{v_k}[\phi_k|v_{\le  k-1}]] & \le 2\E_{v_{\le k-1}}[\phi_{k-1}] \nonumber \\ & \le 4\E_{v_{\le k-2}}[\phi_{k-2}] \nonumber \\ & \le \cdots \nonumber \\ & \le 2^k \phi_0 .\label{eqn:phi-exp-upper}\end{align}  
      We now focus on bounding $\phi_0$. As $|S| = m$, we can compute \[\Phi_0 = \E_x[e^{\lambda \bars_0(x)}] = \frac{m}{2^u}\cdot e^{\lambda(1-m/2^u)}+ \left(1-\frac{m}{2^u}\right)\cdot e^{\lambda(0- m/2^u)}= e^{-\lambda m/2^u}\left(1 + \frac{m(e^\lambda -1)}{2^u}\right) .\] Hence, by taking logarithms, we deduce \[\phi_0 =  \log\left(1 + \frac{m(e^\lambda -1)}{2^u}\right) - \frac{\lambda m}{2^u}\le\frac{m(e^\lambda -1)}{2^u}  - \frac{\lambda m}{2^u} = \frac{\mu(e^\lambda -\lambda - 1)}{2^k}.\label{eqn:phikbound}\] Combining with \cref{eqn:phi-exp-upper} yields the bound \begin{equation*}\E[\phi_k]\le \mu(e^\lambda - \lambda - 1).\end{equation*} Thus, the inequality for random \emph{surjective} linear maps is proved.

      For $\vec{a}\in \Sigma^{\F_2^\ell}(m)$, define $f(\vec{a}) = \log\left(\E_{y\sim \F_2^\ell} e^{\lambda(a_y - \mu)}\right)$. Note that by Jensen's inequality, \[f(\vec{a}) = \log\left(\E_{y\sim \F_2^\ell} e^{\lambda(a_y - \mu)}\right)\ge \E_{y\sim \F_2^\ell}\log(e^{\lambda (a_y-\mu)}) = \lambda(\E_{y\sim\F_2^\ell}[a_y] - \mu) = 0.\] Hence, $f$ maps $\Sigma^{\F_2^\ell}(m)\to [0,\infty)$. Applying \cref{app:removing-surjectivity} to $f$ and noting $\phi_\lambda(S,h) = f(\vec{L}(S,h))$, we deduce that the inequality also follows for random linear $h$ as well.
\end{proof}

\section{Following the Recipe}
The main inequality is proved. We recall the three-step plan to establishing linear hashing bounds: \begin{enumerate}
    \item Find load statistics that can be extracted from $\phi$.
    \item Use expectation and tail bounds on $\phi$ to obtain bounds on such properties.
    \item Optimize the choice of $\lambda$.
\end{enumerate} 
Stitched together, these steps will obtain sharp information regarding the load distribution when hashing with a random $\F_2$-linear map.

\subsection{Extracting Statistics out of $\phi$}

We show that the top-$r$ and bottom-$r$ load averages can be bounded by a linear function of $\phi$.
\begin{lemma}

\label{clm:load-potential}
    Let $u\ge \ell \ge 1,n\coloneqq 2^{\ell}, 1\le r\le n$ be integers, and let $f: \F_2^u\to \F_2^{\ell}$ be a function. For any $\star\in \{+,-\}$, we have
        \[\phi_\lambda(S,f)\ge \lambda(L^*_r(S,f) - \mu) - \log(n/r).\]

\end{lemma}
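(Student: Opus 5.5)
The plan is a one-line averaging argument: drop all but $r$ of the $n$ summands inside the logarithm defining $\phi_\lambda(S,f)$, then apply Jensen's inequality (convexity of $\exp$) to the surviving $r$ terms. The only choice to make is \emph{which} $r$ bins to keep. The sign of $\lambda$ and the choice of $\star$ dictate this, and I will show the inequality holds for every real $\lambda$ and both $\star\in\{+,-\}$.

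Concretely, fix any set $R\subset\F_2^\ell$ of $r$ bins. Since every summand $e^{\lambda(L_v(S,f)-\mu)}$ is positive,
\[
\sum_{v\in\F_2^\ell} e^{\lambda(L_v(S,f)-\mu)} \ge \sum_{v\in R} e^{\lambda(L_v(S,f)-\mu)} \ge r\exp\Bigl(\lambda\Bigl(\tfrac1r\textstyle\sum_{v\in R}L_v(S,f)-\mu\Bigr)\Bigr),
\]
where the second step is Jensen's inequality for the convex function $\exp$. Dividing by $n$ and taking logarithms gives
\[
\phi_\lambda(S,f)\ge \lambda\Bigl(\tfrac1r\textstyle\sum_{v\in R}L_v(S,f)-\mu\Bigr)-\log(n/r).
\]

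It remains to choose $R$ according to the signs. For $\star=+$ and $\lambda\ge0$, take $R$ to be the $r$ heaviest bins, so the average over $R$ is exactly $L_r^+(S,f)$. For $\star=-$ and $\lambda\le 0$, take $R$ to be the $r$ lightest bins, which gives $L_r^-(S,f)$ directly.

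The two mismatched cases use the comparison $L_r^-(S,f)\le L_r^+(S,f)$, which holds because the average of the $r$ smallest entries of a multiset is at most the average of its $r$ largest entries. If $\star=+$ and $\lambda<0$, I take the $r$ lightest bins. The resulting bound is then
\[
\lambda(L_r^-(S,f)-\mu)-\log(n/r)\ge \lambda(L_r^+(S,f)-\mu)-\log(n/r),
\]
where the inequality holds because $\lambda<0$. Symmetrically, if $\star=-$ and $\lambda>0$, I take the $r$ heaviest bins and use $\lambda L_r^+(S,f)\ge\lambda L_r^-(S,f)$.

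There is no real obstacle here. The only point needing care is this sign bookkeeping, so that the lemma holds for all $\lambda$ with either choice of $\star$, as the later optimization over $\lambda$ (positive $\lambda$ for upper bounds, negative $\lambda$ for lower bounds) will require.
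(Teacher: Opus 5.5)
Your proof is correct and matches the paper's argument: discard all but $r$ summands, then apply Jensen's inequality for the convex map $y\mapsto e^{\lambda y}$ to the ones that remain. The sign case analysis is unnecessary, though, because your displayed bound for an arbitrary $R$ already holds for every real $\lambda$; the paper simply takes $R$ to be the $r$ heaviest bins for $\star=+$ and the $r$ lightest bins for $\star=-$, whatever the sign of $\lambda$.
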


\begin{proof}
    We can lower bound \[\E_{y\sim \F_2^\ell}[e^{\lambda (L_y(S,f) - \mu)}] \ge \frac{1}{2^\ell}\sum_{n-r < i\le n} e^{\lambda (L^{(i)}(S,f) - \mu)} \ge \frac{r}{2^\ell}e^{\lambda (L_r^+(S,f) - \mu)}\] and \[\E_{y\sim \F_2^\ell}[e^{\lambda (L_y(S,f)-\mu)}] \ge \frac{1}{2^\ell}\sum_{1 \le i\le r} e^{\lambda (L^{(i)}(S,f) - \mu)} \ge \frac{r}{2^\ell}e^{\lambda (L_r^-(S,f) - \mu)},\] where the last step in each display is Jensen's inequality on the convex function $y\mapsto e^{\lambda y}$. Taking logarithms yields the desired result.
\end{proof}

\subsection{Bounds on Load Statistics}

We will now combine \cref{thm:main-technical} and \cref{clm:load-potential} to obtain expectation and tail bounds on the top-$r$ and bottom-$r$ load averages.

\begin{corollary}
\label{cor:exp}
     Let $u\ge \ell\ge 1$, $m\ge 1$, $n\coloneqq 2^\ell$, $1\le r\le n$ be integers and denote $\mu \coloneqq m/n$. Let $h: \F_2^u\to\F_2^\ell$ be a uniformly random linear map. For any subset $S\subset\F_2^{u}$ of cardinality $m$ and $\star\in \{+,-\}$, we have \[\sup_{\lambda < 0}\frac{\mu(e^\lambda - 1) + \log(n/r)}{\lambda} \le \E_h[L_r^\star(S,h)]\le \inf_{\lambda > 0}\frac{\mu(e^\lambda - 1) + \log(n/r)}{\lambda}. \]
\end{corollary}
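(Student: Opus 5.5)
The plan is to take expectations in \cref{clm:load-potential} over $h$ and combine the result with \cref{thm:main-technical}, dividing by $\lambda$ and tracking the sign. Fix $\lambda\in\R$ and $\star\in\{+,-\}$. Since \cref{clm:load-potential} holds pointwise for every function $f$, it holds for each realization of $h$:
\[\lambda\bigl(L_r^\star(S,h)-\mu\bigr)\le \phi_\lambda(S,h)+\log(n/r).\]
All quantities are bounded, so taking expectations and applying \cref{thm:main-technical} gives
\[\lambda\bigl(\E_h[L_r^\star(S,h)]-\mu\bigr)\le \mu(e^\lambda-\lambda-1)+\log(n/r).\]
The $-\lambda\mu$ terms combine, leaving
\[\lambda\,\E_h[L_r^\star(S,h)]\le \mu(e^\lambda-1)+\log(n/r).\]

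For the upper bound, take $\lambda>0$ and divide, which gives $\E_h[L_r^\star]\le (\mu(e^\lambda-1)+\log(n/r))/\lambda$. Taking the infimum over $\lambda>0$ yields the right-hand inequality.

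For the lower bound, take $\lambda<0$. Dividing by a negative number reverses the inequality, so $\E_h[L_r^\star]\ge (\mu(e^\lambda-1)+\log(n/r))/\lambda$. Taking the supremum over $\lambda<0$ yields the left-hand inequality.

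Both bounds hold for either choice of $\star$, because \cref{clm:load-potential} is valid for both $+$ and $-$. The only step needing care is this sign flip when $\lambda<0$; there is no substantive obstacle.
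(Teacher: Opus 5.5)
Your proposal is correct and follows the paper's proof: you apply \cref{clm:load-potential} pointwise, take expectations and combine with \cref{thm:main-technical}, then isolate $\E_h[L_r^\star(S,h)]$, flipping the inequality for $\lambda<0$. Cancelling the $\lambda\mu$ terms before dividing is just a cosmetic reordering of the same computation.
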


\begin{proof}
    By \cref{clm:load-potential}, we have \[\phi_\lambda(S,h) \ge \lambda (L_r^\star(S,h) - \mu) - \log(n/r).\] Therefore, by \cref{thm:main-technical} and linearity of expectation, we have \[\mu(e^\lambda - \lambda - 1)\ge \E[\phi_\lambda(S,h)] \ge \lambda(\E[L_r^\star(S,h)] - \mu) - \log(n/r).\] For any $\lambda > 0$, it follows by isolating $\E[L_r^\star(S,h)]$ that \[\E[L_r^\star (S,h)]\le \frac{\mu(e^\lambda - 1) + \log(n/r)}{\lambda},\] and for any $\lambda < 0$, it follows  \[\E[L_r^\star (S,h)]\ge \frac{\mu(e^\lambda - 1) + \log(n/r)}{\lambda}.\] The corollary follows.
\end{proof}

We now derive upper and lower tail bounds.

\begin{corollary}
\label{cor:tail}
     Let $u\ge \ell\ge 1$, $m\ge 1$, $n\coloneqq 2^\ell$, and denote $\mu \coloneqq m/n$. Let $h: \F_2^u\to\F_2^\ell$ be a uniformly random linear map, and $S\subset\F_2^{u}$ be a subset of cardinality $m$. For any $\eps\in(0,1)$, and $\lambda > 0$, \[\Pr_h\left[\forall r\in[n]: ~L_r^+(S,h) < \mu  - \frac{\mu}\eps + \frac{\frac{\mu}\eps(e^\lambda - 1 ) + \log(n/r)}{\lambda}\right] \ge 1- \eps\] and for any $\lambda < 0$ we have \[\Pr_h\left[\forall r\in[n]: ~L_r^-(S,h) > \mu - \frac{\mu}\eps + \frac{\frac{\mu}\eps(e^\lambda - 1) + \log(n/r)}{\lambda}\right] \ge 1-\eps.\]  
\end{corollary}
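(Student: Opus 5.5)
Since $\phi_\lambda(S,h)\ge 0$ always (by Jensen's inequality, as noted in the proof of \cref{thm:main-technical}), Markov's inequality applies directly. Combined with \cref{thm:main-technical}, it gives
\[\Pr_h\left[\phi_\lambda(S,h) \ge \frac{\mu(e^\lambda-\lambda-1)}{\eps}\right]\le \eps\]
for any fixed $\lambda\in\R$. On the complementary event, which has probability at least $1-\eps$, we have the deterministic bound $\phi_\lambda(S,h) < \mu(e^\lambda-\lambda-1)/\eps$. This single event does not depend on $r$, which is why the statement can hold simultaneously for all $r\in[n]$. I need the strict inequality, so I use $\Pr[\phi\ge c]\le \E[\phi]/c$ and take the complement. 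If $\E\phi=0$, the complement event still has probability $1$, and the subsequent strictness comes from the same strict bound.

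Next, fix $\lambda>0$ and work on this good event. \cref{clm:load-potential} gives, for every $r$,
\[\lambda(L_r^+(S,h)-\mu)-\log(n/r)\le \phi_\lambda(S,h) < \frac{\mu}{\eps}(e^\lambda-\lambda-1).\]
Dividing by $\lambda>0$ gives
\[L_r^+ < \mu + \frac{\frac{\mu}{\eps}(e^\lambda-1)+\log(n/r)}{\lambda} - \frac{\mu}{\eps},\]
since $\frac{\mu}{\eps}\cdot\frac{-\lambda}{\lambda}=-\frac{\mu}{\eps}$. This is exactly the first claimed bound.

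For $\lambda<0$, I use the same good event together with the $L_r^-$ case of \cref{clm:load-potential}. Dividing by $\lambda<0$ flips the inequality and yields
\[L_r^- > \mu-\frac{\mu}{\eps}+\frac{\frac{\mu}{\eps}(e^\lambda-1)+\log(n/r)}{\lambda}\]
for all $r$ simultaneously.

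There is no real obstacle here. The only points needing care are the nonnegativity of $\phi_\lambda$, which justifies using Markov's inequality, and the sign flip when dividing by a negative $\lambda$.
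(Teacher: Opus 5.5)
Your proof is correct and follows essentially the same route as the paper. You apply Markov's inequality to the nonnegative potential $\phi_\lambda$ using \cref{thm:main-technical}, which gives a single event independent of $r$. On that event you invoke \cref{clm:load-potential} for every $r$ and divide by $\lambda$, flipping the inequality when $\lambda<0$. The Markov threshold $\mu(e^\lambda-\lambda-1)/\eps$ is strictly positive for $\lambda\neq 0$, which is what makes that step valid.
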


\begin{proof}
    
    We will apply Markov's inequality on $\phi_\lambda(S,h)\coloneqq \log\left(\E_{y\sim \F_2^\ell} e^{\lambda (L_y(S,h) - \mu)}\right)$. We first verify $\phi_\lambda$ is nonnegative. Indeed, by Jensen's inequality, we have \[\log\left(\E_{y\sim \F_2^\ell} e^{\lambda (L_y(S,h) - \mu)}\right)\ge \E_{y\sim \F_2^\ell}\left[\log(e^{\lambda (L_y(S,h) - \mu)})\right] = \lambda \E_{y\sim \F_2^\ell}[L_y(S,h) - \mu] = 0.\] Hence, by Markov's inequality and \cref{thm:main-technical}, we have that for any $\eps\in (0,1)$, \[\Pr\left[\phi_\lambda(S,h) < \frac{\mu(e^\lambda - \lambda - 1)}\eps\right] \ge 1-\eps.\] Condition on this event. By \cref{clm:load-potential}, it follows that for \emph{all} $r\in[n]$, \[ \lambda (L_r^\star(S,h) - \mu) - \log(n/r) \le \phi_\lambda(S,h) < \frac{\mu(e^\lambda - \lambda - 1)}\eps.\] For any $\lambda > 0$, this simplifies to \[L_r^\star(S,h) < \mu + \frac{\frac{\mu}\eps(e^\lambda - \lambda - 1) + \log(n/r)}{\lambda} = \mu - \frac{\mu}\eps + \frac{\frac{\mu}\eps(e^\lambda - 1) + \log(n/r)}{\lambda} ,\] and for any $\lambda < 0$, the flipped inequality holds. 
\end{proof}

\subsection{Optimizing $\lambda$}
\label{sec:poisson}
We have now established expectation and tail bounds in terms of the free variable $\lambda$. It now remains to set $\lambda$ optimally. For the proofs, we will plug in the optimal values of $\lambda$ and have things ``magically'' work out. Afterwards, we will explain how the value is derived using the Cram\'{e}r transform.

In order to express these optimal values, we need to formally define some Poisson threshold quantities.  Recall $D(\nu,\mu)\coloneqq \nu\log(\nu/\mu) - \nu+\mu$ is the Kullback-Leibler divergence between $\pois(\nu)$ and $\pois(\mu)$.

\begin{definition}[Poisson thresholds]
\label{def:threshold}
    Let $m>0$ be real, let $n\ge 1$ and $1\le r\le n$ be integers, and put $\mu\coloneqq m/n$. We define $T^+_r(m,n)$ to be the unique real $t\in [\mu,\infty)$ such that \begin{equation}
\label{eqn:log}
     D(t,\mu) = \log(n/r).
\end{equation} We define $T^-_r(m,n)$ to be $0$ if $\mu\le \log (n/r)$, and otherwise to be the unique real $t\in (0,\mu]$ such that \cref{eqn:log} holds.
\end{definition}

The threshold $T^+_r$ is well defined: at $t=\mu$ the left-hand side of \cref{eqn:log} equals $0$, and it increases continuously to infinity on $[\mu,\infty)$. 
The lower threshold $T^-_r$ is also well defined: at $t=\mu$, the left-hand side of \cref{eqn:log} is $0$, and as $t$ approaches $0$, the left-hand side increases continuously to $D(0,\mu) = \mu$. Hence, when $\mu > \log(n/r)$ there is a unique solution. 

Intuitively, the quantity $T^+_r(m,n)$ is the load level at which the standard
Poisson Chernoff bound predicts $r$ heavy bins when hashing $m$ balls randomly into $n$ bins. Each bin's load can be approximated by a Poisson random variable with mean $\mu = m/n$. The upper tail Poisson Chernoff bound states that for $t\ge \mu$, $\Pr[\pois(\mu)\ge t]\le e^{-D(t,\mu)}$ \cite[Theorem 5.4]{mitzupfal}. Thus, the Chernoff bound predicts $r$ of the $n$ bins will have load $\ge t$ when $e^{-D(t,\mu)} = r/n$, which is equivalent to $t = T^+_r(m,n)$. Analogously, since the lower tail Poisson Chernoff bound states that for $t\le \mu$ $\Pr[\pois(\mu)\le t]\le e^{-D(t,\mu)}$, it follows that $T^-_r(m,n)$ is the load level at which the Poisson Chernoff bound predicts $r$ light bins.

 In summary, $T^+_r(m,n)$ and $T^-_r(m,n)$ are the expectation threshold estimates for the $r$th largest and $r$th smallest load in the fully random hash setting, respectively. We show that the $r$th largest and $r$th smallest loads of linear hashing are bounded by these expectation thresholds as well. In fact, we actually show something stronger: the top-$r$ and bottom-$r$ load averages are controlled by these Poisson thresholds.

\begin{theorem}[Poisson Approximation in Expectation]
\label{thm:exp-opt}
    Let $u\ge \ell\ge 1$, $m\ge 1$, $n\coloneqq 2^\ell$, and $1\le r\le n$ be integers, and let $S\subset\F_2^{u}$ be an arbitrary set of cardinality $m$. For a uniformly random linear $h: \F_2^u\to\F_2^\ell$,  \[T_r^-(m,n)\le \E_h[L_r^-(S,h)]\le \E_h[L_r^+(S,h)]\le T^+_r(m,n).\]
\end{theorem}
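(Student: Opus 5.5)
The plan is to plug the optimal value of $\lambda$ into \cref{cor:exp} and check that the resulting bound equals the Poisson threshold. The middle inequality $\E[L_r^-]\le\E[L_r^+]$ is immediate, since the average of the $r$ lightest bins is at most the average of the $r$ heaviest bins pointwise.

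For the upper bound, set $t\coloneqq T_r^+(m,n)$. First suppose $t>\mu$, and choose $\lambda\coloneqq\log(t/\mu)>0$, so that $e^\lambda=t/\mu$. Then
\[\frac{\mu(e^\lambda-1)+\log(n/r)}{\lambda}=\frac{t-\mu+D(t,\mu)}{\log(t/\mu)}=\frac{t-\mu+t\log(t/\mu)-t+\mu}{\log(t/\mu)}=t.\]
Here we used $\log(n/r)=D(t,\mu)$ from \cref{def:threshold}. \cref{cor:exp} then gives $\E[L_r^+]\le t$. The degenerate case $t=\mu$ occurs exactly when $r=n$. In that case $L_n^+=\mu$ identically, so the bound is trivial. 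Alternatively, it follows by letting $\lambda\to0^+$, since the objective tends to $\mu$ when $\log(n/r)=0$.

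For the lower bound, first take $\mu>\log(n/r)$ and $t\coloneqq T_r^-(m,n)\in(0,\mu)$ (again $r=n$ gives $t=\mu$ trivially). Choose $\lambda\coloneqq\log(t/\mu)<0$. The same computation gives $\sup_{\lambda<0}\frac{\mu(e^\lambda-1)+\log(n/r)}{\lambda}\ge t$, and \cref{cor:exp} then yields $\E[L_r^-]\ge t$. If instead $\mu\le\log(n/r)$, then $T_r^-=0$, and the claim $\E[L_r^-]\ge 0$ holds trivially because loads are nonnegative. One could also see this by sending $\lambda\to-\infty$, where the objective tends to $0$ from above or equal since $\log(n/r)-\mu\ge0$.

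I expect no real obstacle. The only care needed is in the boundary cases $r=n$ and $\mu\le\log(n/r)$, and in confirming that $\lambda=\log(t/\mu)$ is indeed the right choice, which is the Cramér-transform calculation that makes the $D$ terms cancel.
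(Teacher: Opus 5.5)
Your proposal is correct and follows the paper's proof: substitute $\lambda=\log(t^\pm/\mu)$ into \cref{cor:exp}, use $D(t^\pm,\mu)=\log(n/r)$ to collapse the ratio to $t^\pm$, and handle $r=n$ and $\mu\le\log(n/r)$ trivially. The paper bounds both $L_r^+$ and $L_r^-$ on each side rather than using the pointwise inequality $L_r^-\le L_r^+$, which makes no real difference. One small slip is in your optional aside: for $\lambda<0$ and $\log(n/r)\ge\mu$, the objective is negative and tends to $0$ from below, not from above. Its supremum is still $0$, and your main argument via nonnegativity of loads is unaffected.
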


\begin{proof}
    Denote $\mu\coloneqq m/n$. It suffices to show $T_r^-(m,n)\le \E_h[L_r^\star(S,h)]\le T_r^+(m,n)$ for all $\star\in \{+,-\}$.  If $r=n$, then $T_r^\star(m,n) = \mu= L_r^\star(S,h)$ for every $h$, and the result follows. Hence, we can assume $r < n$, and therefore $T_r^-(m,n) < \mu < T^+(m,n)$. Fix arbitrary such $r$, and for $\star\in \{+,-\}$, denote $t^\star \coloneqq T_r^\star(m,n)$. 
    
    By setting $\lambda = \log(t^+/\mu) > 0$, we observe that by \cref{cor:exp} that  \[  \E_h[L_r^\star(S,h)]\le \inf_{\lambda > 0}\frac{\mu(e^\lambda - 1) + \log(n/r)}{\lambda}\le \frac{\mu(t^+/\mu - 1) + \log(n/r)}{\log(t^+/\mu)} = t^+, \] where the last equality follows from the fact $D(t^+,\mu) = \log(n/r)$. Now if $m \le n\log(n/r)$, then we trivially have $0 = T_r^-(m,n)\le  \E_h[L_r^\star(S,h)]$. Otherwise, $m > n\log(n/r)$, and so $t^- > 0$. Upon setting $\lambda = \log(t^-/\mu) < 0$, we see \[ \E_h[L_r^\star(S,h)] \ge \sup_{\lambda < 0}\frac{\mu(e^\lambda - 1) + \log(n/r)}{\lambda}\ge \frac{\mu(t^-/\mu - 1) + \log(n/r)}{\log(t^-/\mu)} = t^-. \] These computations, along with \cref{cor:exp}, prove the theorem.
\end{proof}

We now give sharp tail bounds on the top-$r$ and bottom-$r$ load averages.

\begin{theorem}[Poisson Approximation Tail Bounds]
\label{thm:tail-opt}
     Let $u\ge \ell\ge 1$, $m\ge 1$, $n\coloneqq 2^\ell$, $1\le r\le n$ be integers and denote $\mu \coloneqq m/n$. Let $h: \F_2^u\to\F_2^\ell$ be a uniformly random linear map. For any subset $S\subset\F_2^{u}$ of cardinality $m$ and $\eps\in (0,1)$, we have \[\Pr_h\left[L_r^+(S,h) > \mu + \left(T_r^+\left(\frac{m}\eps, n\right) - \frac{\mu}{\eps}\right)\right] \le \eps\] and \[\Pr_h\left[L_r^-(S,h)  < \mu -\left(\frac{\mu}{\eps}-T_r^-\left(\frac{m}\eps, n\right)\right)\right] \le \eps.\]  
\end{theorem}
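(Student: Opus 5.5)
The plan is to apply \cref{cor:tail} with a single, fixed $r$ and a judicious choice of $\lambda$. The choice mirrors the proof of \cref{thm:exp-opt}, but uses the Poisson threshold at the inflated mean $\mu/\eps$ (that is, with $m/\eps$ balls in place of $m$). \cref{cor:tail} already holds simultaneously for all $r$, so restricting to our fixed $r$ only weakens it. It gives, for each $\lambda>0$, that with probability at least $1-\eps$,
\[
L_r^+(S,h) < \mu - \frac{\mu}{\eps} + \frac{\frac{\mu}{\eps}(e^\lambda-1)+\log(n/r)}{\lambda}.
\]
For $\lambda<0$ it gives the reversed inequality for $L_r^-$. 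The whole task is therefore to show that the right-hand fraction can be made to equal $T_r^+(m/\eps,n)$, and in the lower case $T_r^-(m/\eps,n)$.

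First dispose of $r=n$. Then $L_r^\pm=\mu$ and $T_r^\pm(m/\eps,n)=\mu/\eps$, so both events in question say $\mu>\mu$ or $\mu<\mu$, which have probability $0$. For $r<n$, write $\mu'\coloneqq\mu/\eps$ and $t^+\coloneqq T_r^+(m/\eps,n)>\mu'$, which satisfies $D(t^+,\mu')=\log(n/r)$. Set $\lambda=\log(t^+/\mu')>0$. The fraction becomes
\[
\frac{(t^+-\mu')+\log(n/r)}{\log(t^+/\mu')}.
\]
Since $\log(n/r)=t^+\log(t^+/\mu')-t^++\mu'$, the numerator equals $t^+\log(t^+/\mu')$, and the fraction equals exactly $t^+$. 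Hence $\Pr[L_r^+ \ge \mu-\mu'+t^+]\le\eps$, which implies the stated bound (indeed a slightly stronger one, with $\ge$ in place of $>$).

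For the lower tail, split on whether $\mu'\le\log(n/r)$. If so, $T_r^-(m/\eps,n)=0$ and the claimed event is $L_r^-<\mu-\mu/\eps\le 0$, which is impossible because $\mu-\mu/\eps<0$ and $L_r^-\ge 0$. So the probability is $0$. Otherwise $t^-\coloneqq T_r^-(m/\eps,n)\in(0,\mu')$, and we take $\lambda=\log(t^-/\mu')<0$. The same algebraic identity from $D(t^-,\mu')=\log(n/r)$ makes the fraction equal $t^-$. The second part of \cref{cor:tail} then gives $L_r^->\mu-\mu'+t^-$ with probability at least $1-\eps$, which is the claim.

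There is no real obstacle. The only points needing care are these:
\begin{itemize}
\item the degenerate cases $r=n$ and $T_r^-=0$;
\item checking that \cref{def:threshold} applies with the real (non-integer) first argument $m/\eps$, which it does since it allows real $m>0$;
\item the sign bookkeeping when dividing by negative $\lambda$.
\end{itemize}
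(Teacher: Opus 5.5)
Your proposal is correct and matches the paper's proof essentially step for step. Like the paper, you plug $\lambda=\log(\eps t^\pm/\mu)$ into \cref{cor:tail}, use $D(t^\pm,\mu/\eps)=\log(n/r)$ to collapse the fraction to exactly $t^\pm$, and dispose of the degenerate cases $r=n$ and $T_r^-(m/\eps,n)=0$ in the same way.
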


\begin{proof}
    For $\star\in \{+,-\}$, let $t^\star = T_r^\star\left(m/\eps, n\right)$. If $r=n$, then $T_r^\star(m/\eps,n) = \mu/\eps$ and $ L_r^\star(S,h) = \mu$, so the theorem follows. Hence, assume $r < n$ from which we have $t^- < \mu/\eps < t^+$. We note that when $\lambda = \log(\eps t^+/\mu) > 0$,

    \[\frac{\frac{\mu}\eps(e^\lambda - 1) + \log(n/r)}{\lambda } = \frac{t^+ - \frac{\mu}\eps + \log(n/r)}{\log(\frac{\eps t^+}\mu)} = t^+,\] where the last equality follows from $D(t^+, \mu/\eps) = \log(n/r)$. Therefore the upper tail follows by setting $\lambda = \log(\eps t^+/\mu) > 0$ in \cref{cor:tail}. 
    
    If $t^- > 0$, plugging $\lambda = \log(\eps t^-/\mu) < 0$ into \cref{cor:tail} and applying the same reasoning yields the lower tail. If $t^-=0$, then
\[
\mu-\left(\frac{\mu}{\varepsilon}-t^-\right)
=\mu-\frac{\mu}{\varepsilon}<0.
\]
Since $L_r^-(S,h)\ge 0$ for every $h$, the lower-tail event
is empty, and the bound follows.
\end{proof}

At this point, the reader interested in getting tight explicit estimates on tail bounds or expectation bounds on the loads in any ball/bin regime can simply disregard linear maps and analyze $T^+_r(m,n)$ or $T^-_r(m,n)$ directly using Taylor approximations of the Lambert $W$ function \cite{LambertW}. For example, in the canonical setting of $m=n$, $T^+(n,n)$ is the real $t\ge 1$ such that $t\log t - t + 1 = \log n$. This solves to $t = \frac{\log n - 1}{W(\frac{\log n - 1}{e})}$. Taylor expanding $W$ and gathering the first few terms of the reciprocal series yields  \[t = \frac{\log n}{\log\log n}\left(1 + \frac{\log\log\log n + 1}{\log\log n} + (1+o(1))\left(\frac{\log\log\log n}{\log\log n}\right)^2\right),\] This sharpens the prior expectation bound of \cite{bshouty26}. Similar calculations can be done for any $m$ and $n$ and to any fixed asymptotic order.  

\subsection{On Finding the Optimal $\lambda$} As promised, we now mention how the expressions in \cref{thm:exp-opt} and \cref{thm:tail-opt} are derived. We will only focus on the upper bounds; the lower bounds are analogous. We rely on the Cram\'{e}r transform, a standard tool in the Chernoff method.

\begin{definition}[Cram\'{e}r transform]
    Let $\psi$ be a convex function. The Cram\'{e}r transform of $\psi$, denoted as $\psi^*$, is defined by \[\psi^*(t) = \sup_{\lambda > 0} (\lambda t - \psi(\lambda)) \]
\end{definition}
We will use the following standard lemma about Cram\'{e}r transforms, provable using calculus.
\begin{lemma}[\hspace{-0.1mm}{\cite[Lemma 2.4]{blm13}}]
\label{lem:cram}
    For a function $f: \R\to\R$, define the generalized inverse \[f^{-1}(y) = \inf \{t\ge 0: f(t) > y\}. \]For a convex and continuously differentiable function $\psi$ such that $\psi(0) = \psi'(0) = 0$ and $y\ge 0$, we have \[(\psi^*)^{-1}(y) = \inf_{\lambda > 0} \frac{\psi(\lambda) + y}\lambda.\] 
\end{lemma}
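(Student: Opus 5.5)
The plan is to avoid calculus and prove the identity by comparing superlevel sets directly. Fix $y\ge 0$ and set
\[R\coloneqq \inf_{\lambda>0}\frac{\psi(\lambda)+y}{\lambda}.\]
Since each quotient is a real number, $R<\infty$. The claim will then follow from the set identity
\[\{t\ge 0:\psi^*(t)>y\} = \{t\ge 0: t>R\},\]
together with the fact that $R\ge 0$. Given both, the infimum of the right-hand set is exactly $R$, which is the statement of \cref{lem:cram}.

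First, I show $R\ge 0$. Because $\psi$ is convex and differentiable at $0$, it lies above its tangent line there: $\psi(\lambda)\ge \psi(0)+\psi'(0)\lambda = 0$ for all $\lambda$. Hence for every $\lambda>0$ we have $(\psi(\lambda)+y)/\lambda\ge 0$, and therefore $R\ge 0$. This is the only place where the hypotheses $\psi(0)=\psi'(0)=0$ and differentiability are used.

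Second, I prove the set identity by unwinding the supremum. For $t\ge 0$,
\[\psi^*(t)>y \iff \exists\,\lambda>0:\ \lambda t-\psi(\lambda)>y \iff \exists\,\lambda>0:\ t>\frac{\psi(\lambda)+y}{\lambda}.\]
The second equivalence is division by $\lambda>0$. The last condition is equivalent to $t>R$, in both directions:
\begin{itemize}
\item If some $\lambda$ gives $t>(\psi(\lambda)+y)/\lambda$, then $t>R$, since $R$ is at most that quotient.
\item Conversely, if $t>R$, then by the definition of an infimum some $\lambda>0$ has quotient strictly less than $t$.
\end{itemize}
The strict inequalities on both sides (``$>y$'' in the generalized inverse and ``$>R$'') are what make these match.

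Finally, since $R\ge 0$, the set $\{t\ge 0: t>R\}$ is the open ray $(R,\infty)$, whose infimum is $R$. So $(\psi^*)^{-1}(y)=R$.

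The main step to get right is the second one: the infimum equivalence with strict inequalities, and checking that the constraint $t\ge 0$ in the generalized inverse does not cut the set off above $R$. Nonnegativity of $R$ handles the latter. Convexity is used only through the tangent-line bound, so no further calculus is required.
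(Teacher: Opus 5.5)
Your proof is correct and complete. The paper does not prove this lemma itself. It cites \cite[Lemma 2.4]{blm13} and remarks only that the lemma is ``provable using calculus.''

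Your route is order-theoretic instead. You identify the superlevel set $\{t\ge 0:\psi^*(t)>y\}$ directly as the open ray $(R,\infty)$. This uses only two facts: a supremum exceeds $y$ iff some term does, and an infimum lies below $t$ iff some term does.

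The proof in \cite{blm13} runs along similar lines. It first shows $\psi^*$ is nondecreasing, then rewrites the generalized inverse as $\sup\{t\ge 0:\psi^*(t)\le y\}$. In your version that monotonicity step is unnecessary, because it falls out of the set identity.

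Your argument also shows that continuous differentiability is superfluous. The only property of $\psi$ you use is $\psi\ge 0$ on $(0,\infty)$, which you correctly get from the tangent line at $0$. That property is exactly what rules out the degenerate case $R<0$, where the constraint $t\ge 0$ would truncate the set. The argument also works unchanged when $\psi^*$ takes the value $+\infty$ (e.g.\ $\psi\equiv 0$), a case the lemma's phrasing ``$f:\R\to\R$'' glosses over.

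A calculus proof via the first-order condition would additionally locate the minimizer. For the centered Poisson cumulant generating function, the minimizer is $\lambda=\log(T_r^+(m,n)/\mu)$, the value the paper plugs in by hand in the proof of \cref{thm:exp-opt}. Your argument pins down only the value of the infimum, which is all the lemma asserts.
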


From this lemma, we can solve the infimum optimization problems in \cref{cor:exp} and \cref{cor:tail}. Letting $P_\mu(\lambda)\coloneqq \mu(e^\lambda -\lambda- 1)$, we see that $P_\mu(0) = P_\mu'(0) = 0$, and thus by \cref{lem:cram},  \[ \inf_{\lambda > 0} \frac{\mu(e^\lambda - 1)+\log(n/r)}\lambda = \mu + \inf_{\lambda > 0} \frac{P_\mu(\lambda)+\log(n/r)}\lambda = \mu + (P_\mu^*)^{-1}(\log(n/r)).\] $P_\mu$ is the centered cumulant generating function of $\pois(\mu)$, and it is known from the literature that for $\nu\ge 0$, 
$P_{\mu}^*(\nu) = D(\mu + \nu,\mu)$ \cite[Section 2.2]{blm13}. Hence, by \cref{def:threshold},
\[
\mu+(P_\mu^*)^{-1}(\log(n/r))=T_r^+(\mu n,n),
\]
which gives the upper bound in \cref{thm:exp-opt}. For the upper bound in \cref{thm:tail-opt}, observe that $P_\mu/\eps=P_{\mu/\eps}$. Applying the same argument to the threshold in \cref{cor:tail} gives
\[
\mu  - \frac{\mu}\eps + \inf_{\lambda>0}\frac{\frac{\mu}\eps(e^\lambda - 1 ) + \log(n/r)}{\lambda} = \mu+\inf_{\lambda>0}
\frac{P_\mu(\lambda)/\eps+\log(n/r)}{\lambda}
=\mu+T_r^+\left(\frac{\mu n}\eps,n\right)-\frac{\mu}{\eps}.
\]

\section{Applications}

In this section, we present some interesting consequences of \cref{thm:exp-opt} and \cref{thm:tail-opt}. The first consequence showcases how tight of expectation bounds we can derive from \cref{thm:exp-opt}. The second consequence gives a strong concentration theorem on bin loads. Optimal two-sided bounds pop out of this lemma as a simple corollary.

\subsection{Additively Optimal Expected Max Load}

We show that in the canonical setting of $n$ balls into $n$ bins, the expected max load of a random linear map is extremely close to the expected max load of a fully random map. To start, we first bound how close $T^+(m,n)$ is to the max load of a fully random map in the sparse regime. As the number of balls decreases with respect to the number of bins, the Chernoff threshold grows with respect to the average load. 
Letting $\mu\coloneqq m/n$ be the average load, and $t \coloneqq T^+(m,n)$ be the Chernoff threshold, we show in the sparse regime where $t/\mu$ grows sufficiently fast, $t$ is within $3/2 + o(1)$ of the expected max load of a fully random map. The proof is inspired by the classic max load lower bounds established via the Poisson Approximation theorem \cite[Lemma 5.12]{mitzupfal}.

\begin{lemma}
\label{lem:Lnn-estimate}
     Let $n, m\coloneqq m(n)$ be integers, $0 < \eta(n)\le 1$, $\mu\coloneqq m/n$ and $t\coloneqq T^+(m,n)$. Define $R(m,n) = \E_f[\max_{i\in [n]} |f^{-1}(i)|]$, where $f: [m]\to[n]$ is a uniformly random function. If $\lim_{n\to\infty}\frac{(t/\mu)^\eta}{\sqrt{\mu} \log(e+t)}\to \infty$, then we have \[R(m,n)\ge  \left\lfloor t- \frac{1}2 - \eta\right\rfloor -o(1).\]
\end{lemma}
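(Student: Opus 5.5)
The plan is to run the classical Poisson-approximation lower bound for the maximum load (as in \cite[Lemma 5.12]{mitzupfal}), with the quantitative input supplied by \cref{prop:kl-est} and the defining identity $D(t,\mu)=\log n$. Set $k\coloneqq \lfloor t-\tfrac12-\eta\rfloor$. If $k\le 0$ the claim is trivial, so assume $k\ge 1$. Since $\max_i|f^{-1}(i)|\ge 0$, we have
\[R(m,n)\ge k\Pr_f\bigl[\max_i |f^{-1}(i)|\ge k\bigr].\]
Hence it suffices to show $k\cdot\Pr_f[\max_i|f^{-1}(i)|<k]=o(1)$.

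The event ``all loads are $<k$'' is monotone decreasing in the number of balls. By the monotone form of the Poisson approximation theorem \cite[Section 5.4]{mitzupfal}, its probability under exact balls-in-bins is at most twice its probability under $\pois(\mu)^{\otimes n}$. By independence,
\[\Pr_f\bigl[\max_i |f^{-1}(i)|<k\bigr]\le 2\bigl(1-p\bigr)^n\le 2e^{-np}, \qquad\text{where } p\coloneqq \Pr[\pois(\mu)=k]\le \Pr[\pois(\mu)\ge k].\]
Stirling's formula gives $k!\le e\sqrt{k}\,(k/e)^k$, and therefore
\[p=\frac{e^{-\mu}\mu^k}{k!}\ge \frac{e^{-D(k,\mu)}}{e\sqrt{k}}.\]
Using $D(t,\mu)=\log n$ together with \cref{prop:kl-est} (we need $t>k>0$), we get
\[np\ge \frac{e^{D(t,\mu)-D(k,\mu)}}{e\sqrt k}\ge \frac{(k/\mu)^{t-k}}{e\sqrt k}\ge \frac{(k/\mu)^{1/2+\eta}}{e\sqrt{k}}=\frac{(k/\mu)^{\eta}}{e\sqrt{\mu}}.\]
The third inequality uses $t-k\ge \tfrac12+\eta$ and $k\ge \mu$.

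It remains to check that $k\cdot 2e^{-np}\to 0$. This holds provided $np-\log(e+t)\to\infty$, because $k\le t$. The step I expect to need the most care is comparing $k$ with $t$, so that the hypothesis on $(t/\mu)^\eta/(\sqrt\mu\log(e+t))$ transfers from $t$ to $k$. Since $k\ge t-\tfrac32-\eta\ge t-\tfrac52$, we have $k\ge t/2$ whenever $t\ge 5$. The hypothesis gives $t/\mu\to\infty$, and it also gives $t\to\infty$. To see the latter: $\mu\le t$, and if $t$ stayed bounded then $D(t,\mu)\le t\log(t/\mu)$ would force $\mu$ to shrink at least as fast as $n^{-1/t}$, and one checks that the numerator then cannot outgrow the denominator in the required way. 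Alternatively, if $t$ is bounded then $k$ is bounded, and one only needs $np\to\infty$, which follows from the same display. In either case $k/\mu\ge \tfrac12 t/\mu$ eventually, so $k\ge\mu$ holds and
\[np\ge \frac{2^{-\eta}}{e}\cdot\frac{(t/\mu)^\eta}{\sqrt\mu},\]
which grows faster than any constant multiple of $\log(e+t)$ by hypothesis. Consequently $k e^{-np}\le t\,e^{-np}\to 0$, and so $R(m,n)\ge k-o(1)$, as claimed.
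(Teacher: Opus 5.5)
Your skeleton is the paper's: the bound $R\ge k\Pr[\max_i|f^{-1}(i)|\ge k]$, the monotone Poisson approximation, Stirling, and \cref{prop:kl-est} with $D(t,\mu)=\log n$. Your estimate $np\ge (k/\mu)^\eta/(e\sqrt\mu)$ is correct \emph{provided} $k\ge\mu$. The gap is in the final step, where you transfer the hypothesis from $t$ to $k$.

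Your claim that the hypothesis forces $t\to\infty$ is false. For $m=\lfloor n^{2/3}\rfloor$ we have $\mu\approx n^{-1/3}$, and solving $D(t,\mu)=\log n$ gives $t\to 3$. Yet $(t/\mu)^\eta/(\sqrt\mu\log(e+t))\gtrsim n^{1/6}\to\infty$. Your own estimate $\mu\le t\,n^{-1/t}$ for bounded $t$ makes the ratio blow up, so it confirms the hypothesis rather than contradicting it.

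The fallback does not repair this as written:
\begin{itemize}
\item ``$t$ bounded'' and ``$t\to\infty$'' are not exhaustive.
\item You derived $k\ge t/2$ only for $t\ge 5$, and it can fail below that: at $t=2.4$ one has $k\le 1<t/2$. Yet you invoke it ``in either case''.
\item You assert $t/\mu\to\infty$ without proof, and your verification of $k\ge\mu$ rests on it.
\end{itemize}
This step cannot be skipped. The condition $k\ge\mu$ is exactly what licenses $(k/\mu)^{t-k}\ge (k/\mu)^{1/2+\eta}$; for $k<\mu$ the inequality reverses.

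The fix is short and is what the paper does.
\begin{itemize}
\item First, $k\ge 1$ alone gives $t<k+\tfrac32+\eta\le k+\tfrac52\le\tfrac72 k$. Hence $k\ge t/4$ with no assumption on the size of $t$, and your display becomes $np\ge (t/\mu)^\eta/(4^\eta e\sqrt\mu)=\omega(\log(e+t))$ uniformly in $n$.
\item Second, dispose of the case $k\le\mu$ directly: the maximum load is at least the average load, so $R\ge\mu\ge k$.
\end{itemize}
Alternatively, you can justify $t/\mu\to\infty$ instead of splitting off that case. Write $D(t,\mu)=\mu\,g(t/\mu)$ with $g(x)=x\log x-x+1$, which is increasing on $[1,\infty)$. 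A bound $t/\mu\le C$ along a subsequence would force $\mu\ge\log n/g(C)\to\infty$, making the hypothesis ratio at most $C/\sqrt\mu\to 0$, a contradiction.

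With these two changes your argument is complete.
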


    \begin{proof}
     Let $\delta = \frac{1}2 + \eta$ and set $s = \lfloor t-\delta\rfloor$. If $s \le \mu$, then since $R(m,n)\ge \mu$, it follows that $R(m,n)\ge \mu\ge s$. Henceforth, assume $s > \mu$, which also implies $s\ge 1$.
    
    Let $X_1,\dots X_n$ be independent $\pois(\mu)$ random variables. By the Poisson approximation theorem for monotonically decreasing events \cite[Corollary 5.11]{mitzupfal},  \[\Pr\left[\max_{i\in [n]} |f^{-1}(i)| < s\right]\le 2\Pr\left[\max_i X_i < s\right]= 2 (\Pr[X_1 < s])^n\le 2e^{-n\Pr[X_1\ge s]},\]
    where the last inequality used $1-x\le e^{-x}$. We can crudely bound 
    \[R(m,n)  \ge s \cdot \Pr\left[\max_{i\in [n]} |f^{-1}(i)| \ge s\right]    = s - s\Pr\left[\max_{i\in [n]} |f^{-1}(i)| < s\right]  > s - 2s e^{-n\Pr[X_1\ge s]}. \]
   It remains to show $2s e^{-n\Pr[X_1\ge s]} = o(1)$. Using Stirling's formula \cite{robbins}, we estimate
    \[ \Pr[X_1 = s]= e^{-\mu}\frac{\mu^s}{s!}\ge e^{-\mu}\left(\frac{e\mu}{s}\right)^s\frac{1}{e^{1/12s}\sqrt{2\pi s}} = \frac{e^{-D(s, \mu)}}{e^{1/12s}\sqrt{2\pi s}} \ge  \frac{e^{-D(s, \mu)}}{2\sqrt{\pi s}}.\]
    We now use the definition of $t$ to lower bound
\[n\Pr[X_1 \ge s]  \ge e^{D(t,\mu)} \cdot \Pr[X_1 = s] \ge \frac{e^{D(t,\mu)-D(s, \mu)}}{2\sqrt{\pi s}}\ge \frac{e^{\delta \log(s/\mu)}}{2\sqrt{\pi s}} =  \frac{1}{2\sqrt{\pi \mu}} \left(\frac{s}\mu\right)^\eta,\] where the last inequality follows from \cref{prop:kl-est}. As $s\ge 1$ and $s = \lfloor t-\delta\rfloor$, it follows $s\ge t/4$. Hence, by the limit assumption, \[n \Pr[X_1\ge s]\ge \frac{1}{2\sqrt{\pi \mu}} \left(\frac{s}\mu\right)^\eta\ge\frac{1}{2\cdot 4^\eta\sqrt{\pi \mu}} \left(\frac{t}\mu\right)^\eta = \omega(\log(e+t)). \] Therefore, 

\[R(m,n)\ge s- 2se^{-n\Pr[X_1\ge s]} \ge s -  2t e^{-\omega(\log(e+t))} = s - o(1).\]

\end{proof}

We can now deduce \cref{thm:headline}.

\begin{proof}[Proof of \cref{thm:headline}]
Denote $t = T^+_1(n,n)$. Let $\eta(n) = \min\left\{1,\sqrt{1/\log t}\right\} = o(1)$. We see $\frac{t^\eta}{\log(e+t)} = \frac{e^{\sqrt{\log t}}}{\log(e+t)}\to \infty$ as long as $t\to \infty$. Indeed, $D(t,1) = \log n$ implies \[\log n = t\log (t/e) + 1 <  t^2/e+1\implies t > \sqrt{e\log (n/e)} = \omega(1), \]
    Hence, we can invoke \cref{lem:Lnn-estimate} and \cref{thm:exp-opt} with $m=n$ to get that \[\E[L_1^+(S,h)]\le t\le \E[L_1^+([n], f)] + t -  \left\lfloor t- \frac{1}2 - \eta\right\rfloor +o(1) =  \E[L_1^+([n], f)] +\frac{1}2 + c_\ell +o(1),\] where $c_\ell = \left\{t-\frac{1}2-\eta\right\} = \left\{T^+_1(2^\ell,2^\ell)-\frac{1}2-\eta(2^\ell)\right\}$. Denote $t_\ell\coloneqq T^+_1(2^\ell,2^\ell)$, $\eta_\ell \coloneqq \eta(2^\ell)$, and $d_\ell \coloneqq t_\ell - \frac{1}2 - \eta_\ell$, whence $c_\ell = \{d_\ell\}$. Notice that to show a subsequence of $(c_\ell)$ goes to $0$, it suffices to show that $(d_\ell)$ is monotone increasing, $d_\ell\to \infty$, and $d_{\ell + 1}- d_\ell\to 0$. With these three claims, for each integer we can take the first $\ell$ at which $d_\ell$ crosses that integer; since the increments are $o(1)$, the overshoot (and hence the corresponding fractional parts) approach 0. The first two claims are true as $t_\ell$ monotonically increases to $\infty$ and $\eta_\ell$ monotonically decreases to $0$. For the final claim, we first observe \[\log 2 = \log(2^{\ell + 1}) - \log(2^\ell) = D(t_{\ell + 1},1) - D(t_\ell, 1) \ge (t_{\ell + 1} - t_\ell)\log t_\ell\] by \cref{prop:kl-est}. Therefore, \[d_{\ell + 1} - d_\ell = t_{\ell + 1} - t_\ell - \eta_{\ell + 1} + \eta_\ell \le \frac{\log 2}{\log t_\ell} + o(1)  = o(1)\] since $t_\ell\to\infty$. With the three claims proven, we now conclude.
\end{proof}

\begin{remark}
\label{rem:sparse-regime}
We note that the same $\frac{1}2 + c_\ell + o(1)$ gap can be derived for small enough $m$ relative to $n$ (as dictated by the limit condition in \cref{lem:Lnn-estimate}). We do not include the explicit calculations here.
\end{remark}
\subsection{Concentration}
In this section, we use \cref{thm:exp-opt} to prove sharp load concentration and uniformity in the dense regime when $m\ge n\log n$. First, we will need some estimates on $T^+_r(m,n)$ and $T^-_r(m,n)$ in the dense regime. 

\begin{proposition}
\label{prop:approx}
    Let $m$ be a positive real, $n \ge r\ge 1$ be integers, and $\mu \coloneqq m/n$. When $\mu\ge \log(n/r)$, we have \[T^+_r(m,n) \le  \mu + 2\sqrt{\mu \log (n/r)}\] and \[T^-_r(m,n)\ge \mu - \sqrt{2\mu \log(n/r)}.\]
\end{proposition}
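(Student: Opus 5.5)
Write $L\coloneqq\log(n/r)$. The plan is to use that $t\mapsto D(t,\mu)$ is strictly increasing on $[\mu,\infty)$ and strictly decreasing on $(0,\mu]$. The thresholds are defined by $D(t,\mu)=L$, so it suffices to exhibit a candidate point where the divergence already reaches $L$. Concretely:
\begin{itemize}
\item \emph{Upper bound.} Set $t_1\coloneqq\mu+2\sqrt{\mu L}$. If we show $D(t_1,\mu)\ge L$, then monotonicity gives $T^+_r(m,n)\le t_1$.
\item \emph{Lower bound.} Set $t_2\coloneqq\mu-\sqrt{2\mu L}$. If $t_2\le 0$ the claim is trivial, since $T^-_r\ge 0$. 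Otherwise $t_2\in(0,\mu]$, and it suffices to show $D(t_2,\mu)\ge L$. If $\mu\le L$ then $T^-_r=0$, but then $t_2\le \mu-\sqrt2\mu<0$, so this case is also covered. If $\mu>L$, the equation $D(t,\mu)=L$ has its unique root in $(0,\mu)$ at or above $t_2$.
\end{itemize}

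Both divergence estimates come from writing $t=\mu(1+x)$, which gives $D(t,\mu)=\mu\,g(x)$ with $g(x)\coloneqq(1+x)\log(1+x)-x$. This reduces everything to standard inequalities for $g$.

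\emph{Lower deviation.} Here $x\in(-1,0]$. Use $g(x)\ge x^2/2$, which holds because $g(0)=g'(0)=0$ and $g''(x)=1/(1+x)\ge 1$ on this range. With $x=-\sqrt{2L/\mu}$ this gives $\mu g(x)\ge \mu\cdot L/\mu=L$, as required.

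\emph{Upper deviation.} Here $x=2\sqrt{L/\mu}$, and the hypothesis $\mu\ge L$ gives $x\in[0,2]$. On this range $g''(x)=1/(1+x)\ge 1/3$. That bound only yields $g(x)\ge x^2/6$, i.e.\ $\mu g(x)\ge \tfrac{4}{6}L$, which is too weak. So I would instead use the sharper standard bound
\[g(x)\ge \frac{x^2}{2(1+x/3)}.\]
This follows by comparing derivatives, or from the Bernstein form of the Poisson bound. It gives $\mu g(x)\ge 2L/(1+x/3)$, and $x\le 2$ makes $1+x/3\le 5/3$, so $\mu g(x)\ge \tfrac{6}{5}L\ge L$.

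The main obstacle is this upper case: the constant $2$ leaves little slack, so the crude $g''$ bound fails and the Bernstein-type inequality for $g$ must be verified. To check it, set $k(x)=g(x)(2+2x/3)-x^2$ and note $k(0)=k'(0)=0$. I would then show $k''\ge 0$ on $[0,2]$ by differentiating directly, or alternatively verify $g(x)\ge x^2/(2+2x/3)$ via the series $\log(1+x)$ bounds.
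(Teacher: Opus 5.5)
Your proof is correct. Its overall shape matches the paper's: reparametrize so that $D(\mu(1+x),\mu)=\mu g(x)$, bound $g$ below by a quadratic, and conclude by monotonicity of $D(\cdot,\mu)$ on each side of $\mu$.

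\textbf{Lower tail.} This is the same estimate as the paper's. The paper integrates $-\log(1-u)\ge u$ over $\int_0^{b/\mu}$, which is your bound $g(x)\ge x^2/2$ on $(-1,0]$ in integrated form. Your candidate-point formulation only differs in making the edge cases $t_2\le 0$ and $\mu=\log(n/r)$ explicit.

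\textbf{Upper tail.} Here the routes differ. You rightly note that the curvature bound $g''\ge 1/3$ loses too much. The paper's repair is cheaper than Bernstein's inequality. Since $g'(x)=\log(1+x)$ is concave, on $[0,2]$ it dominates its chord, so $\log(1+x)\ge x/2$ there. Integrating gives $g(x)\ge x^2/4$, and at $x=2\sqrt{\log(n/r)/\mu}$ this yields exactly $\mu g(x)\ge\log(n/r)$, with no room to spare.

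\textbf{What your route costs and buys.} It costs one extra verification, of $g(x)\ge x^2/(2(1+x/3))$. That verification does go through: with your $k$, one gets $k''(0)=0$ and $k'''(x)=\tfrac{4x}{3(1+x)^2}\ge 0$. In exchange it gives slack, namely the factor $6/5$. Also, the Bernstein-type bound holds for all $x\ge 0$. It therefore gives $T^+_r(m,n)\le \mu+\sqrt{2\mu\log(n/r)}+\tfrac{2}{3}\log(n/r)$ with no hypothesis on $\mu$. This improves the constant $2$ to $\sqrt{2}+o(1)$ when $\mu\gg\log(n/r)$. The paper's chord bound is tailored to the range $x\le 2$ and to the constant $2$.
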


\begin{proof}
Let $a \coloneqq T^+_r(m,n) - \mu$, $b \coloneqq \mu - T^-_r(m,n)$. We first bound $b$ since it is easier. By using \cref{prop:kl-est}, we estimate \begin{align*}\log (n/r)  = D(\mu-b,\mu) - D(\mu,\mu)  &  = \int_{\mu}^{\mu-b} \log(x/\mu)dx \\ & =  -\mu \int_0^{b/\mu} \log(1-u)du  \\ & \ge \mu \int_0^{b/\mu}udu  = \frac{b^2}{2\mu}. \end{align*} Thus $b \le \sqrt{2\mu \log (n/r)}$. We now use a similar calculation to bound $a$. Set $a_0\coloneqq 2\sqrt{\mu\log (n/r)}$. Since $\mu\ge\log (n/r)$, we have $a_0/\mu\le 2$. By noting $\log(1+u)\ge u/2$ for $u\in[0,2]$, and using \cref{prop:kl-est}, we see
\[D(\mu+a_0,\mu)=\mu\int_0^{a_0/\mu}\log(1+u)\,du\ge \frac{\mu}{4}\left(\frac{a_0}{\mu}\right)^2=\log (n/r).\]
Because $a\mapsto D(\mu+a,\mu)$ is increasing and $D(\mu+a,\mu)=\log(n/r)$, it follows that $a\le a_0=2\sqrt{\mu\log (n/r)}$.

\end{proof}

\begin{remark}
    By combining \cref{prop:approx} with \cref{thm:exp-opt}, we obtain that the expected max load for $m = \omega(n\log n)$ is bounded by $\frac{m}n + 2\sqrt{\frac{m\log n}n} = (1+o(1))\frac{m}n$, strongly resolving a question of Pathegama and Barg over $\F_2$ (see Remark 6 and the preceding paragraph of \cite{PaBa25}).
\end{remark}

We now prove our concentration result: the number of balls in any density-$\rho$ subset of bins will not deviate far from expectation.

\begin{theorem}
\label{thm:mixing}
    Let $u\ge \ell \ge 1, n\coloneqq 2^\ell$ be integers, $\eps\in (0,1), \rho\in (0,1)$ such that $\rho n\in \Z$, and $A\subset \F_2^u$ be an arbitrary subset of size $m$. Assume $m > n\log(1/\rho)$. For a uniform random linear $h: \F_2^u\to\F_2^\ell$, we have \[\Pr_h\left[\forall B\subset\F_2^\ell\text{ s.t. } |B|=\rho n, \left|\Pr_{a\sim A}[h(a)\in B] - \rho\right| \le 5\rho \sqrt{\frac{n \log (1/\rho)}{\eps m}}\right]\ge 1 - \eps.\]
\end{theorem}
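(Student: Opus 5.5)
The plan is to reduce the statement about all density-$\rho$ sets $B$ to a statement about the top-$r$ and bottom-$r$ load averages with $r \coloneqq \rho n$, and then apply the tail bounds of \cref{thm:tail-opt} together with the dense-regime estimates of \cref{prop:approx}.

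\textbf{Reduction.} For any $B$ with $|B| = r$, the number of balls landing in $B$ is $\sum_{v\in B} L_v(A,h)$. This lies between $rL_r^-(A,h)$ and $rL_r^+(A,h)$, since the $r$ lightest bins minimize such a sum and the $r$ heaviest maximize it. Dividing by $m = \mu n$ gives
\[\rho\,\frac{L_r^-(A,h)}{\mu} \le \Pr_{a\sim A}[h(a)\in B] \le \rho\,\frac{L_r^+(A,h)}{\mu}.\]
So it suffices to show that, with probability at least $1-\eps$,
\[\mu - \delta \le L_r^-(A,h) \le L_r^+(A,h) \le \mu + \delta, \qquad \delta \coloneqq 5\sqrt{\mu\log(1/\rho)/\eps}.\]
The right-hand side of the target bound is then exactly $\rho\delta/\mu$, because $\rho\sqrt{n\log(1/\rho)/(\eps m)} = \rho\sqrt{\log(1/\rho)/(\eps\mu)}$ and $\delta/\mu = 5\sqrt{\log(1/\rho)/(\eps\mu)}$.

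\textbf{Tail bounds.} Apply \cref{thm:tail-opt} with failure probability $\eps/2$ on each side and take a union bound. Set $\mu' \coloneqq 2\mu/\eps$. The upper deviation is $T_r^+(2m/\eps,n) - \mu'$ and the lower deviation is $\mu' - T_r^-(2m/\eps,n)$. The hypothesis $m > n\log(1/\rho)$ gives $\mu' \ge \mu > \log(n/r)$, so \cref{prop:approx} applies at mean $\mu'$. It bounds the upper deviation by $2\sqrt{\mu'\log(1/\rho)} = 2\sqrt{2}\sqrt{\mu\log(1/\rho)/\eps}$ and the lower deviation by $\sqrt{2\mu'\log(1/\rho)} = 2\sqrt{\mu\log(1/\rho)/\eps}$. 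Both are at most $\delta$, since $2\sqrt{2} < 5$.

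\textbf{Edge cases and obstacle.} For $\rho > 1/2$ nothing extra is needed: the argument above never used $\rho \le 1/2$, as long as $\log(1/\rho) > 0$, which holds for $\rho < 1$. I expect no real obstacle. The only points to check are the constants and that the regime condition of \cref{prop:approx} holds at the inflated mean $\mu'$, which it does because $\mu' \ge \mu$.
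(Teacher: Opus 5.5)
Your proof is correct and follows the paper's argument: the same reduction to $L_r^\pm$ with $r=\rho n$, then \cref{thm:tail-opt} at failure probability $\eps/2$ per side with a union bound, then \cref{prop:approx} at the inflated mean $2\mu/\eps$. The only difference is cosmetic: the paper bounds $\max\{L_r^+-\mu,\mu-L_r^-\}$ by $L_r^+-L_r^-$ and adds the two deviations to get the constant $2+2\sqrt{2}<5$, whereas you bound each side separately, which gives the slightly better constant $2\sqrt{2}$.
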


\begin{proof}
    Denote $r\coloneqq \rho n$ and $\mu\coloneqq m/n$. Notice that $\Pr_{a\sim A}[h(a)\in B] = \frac{|A\cap h^{-1}(B)|}{|A|}$ is maximized when $B$ consists of the $r$ heaviest bins. In this case, we will have \[\Pr_{a\sim A}[h(a)\in B] = \frac{|A\cap h^{-1}(B)|}{m} = \frac{rL_{r}^+(A,h)}{m}.\] Similarly, the minimum value of the expression is obtained when $B$ consists of the $r$ lightest bins, whence \[\Pr_{a\sim A}[h(a)\in B] = \frac{rL_{r}^-(A,h)}{m}.\] Therefore, for all $B\subset \F_2^\ell$ of size $r$,  \begin{align*}\left|\Pr_{a\sim A}[h(a)\in B] - \rho\right| &\le \max\left\{\frac{rL_{r}^+(A,h)}m - \frac{r}n, \frac{r}n - \frac{rL_{r}^-(A,h)}m\right\}  \\ & = \frac{r}m\max\left\{L_{r}^+(A,h) - \mu, \mu - L_{r}^-(A,h)\right\} \\ & \le \frac{r}m(L^+_{r}(A,h) - L^-_{r}(A,h)). \end{align*} We know from \cref{thm:tail-opt} and a union bound is that for each $r\in [n]$, \[\Pr\left[ L^+_r(A,h) - L^-_r(A,h) > T_r^+(2m/\eps, n) - T_r^-(2m/\eps, n)\right]\le \frac{\eps}2 + \frac{\eps}2 = \eps.\]   Therefore, with probability at least $1-\eps$, we have simultaneously for all $B\subset \F_2^\ell$ of density $\rho$ that \begin{align*}\left|\Pr_{a\sim A}[h(a)\in B] - \rho\right| & \le \frac{r}m (T_{r}^+(2m/\eps, n) - T_{r}^-(2m/\eps, n)) \\ & \le \frac{5r}m\sqrt{\frac{\mu\log(n/r)}\eps}  \\ & = 5\rho\sqrt{\frac{n\log(1/\rho)}{\eps m}}, \end{align*} where we used \cref{prop:approx} in the second inequality.
\end{proof}

A corollary of this result is tight two-sided bounds.

\begin{corollary}
\label{thm:twosided}
 Let $u\ge \ell\ge 1$, $n\coloneqq 2^\ell$, $m > n\log n$ be integers, and $A\subset\F_2^{u}$ be an arbitrary set of cardinality $m$. For every $\eps\in(0,1)$, a uniformly random linear map $h: \F_2^u\to\F_2^\ell$ satisfies \[\Pr\left[\forall y\in \F_2^\ell, \left|L_y(A,h) - \frac{m}n\right| \le 5\sqrt{\frac{m\log n}{\eps n}}\right]\ge 1-\eps .\]
\end{corollary}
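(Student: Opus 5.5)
The plan is to specialize \cref{thm:mixing} to $\rho = 1/n$, i.e.\ to sets $B$ consisting of a single bin. Put $\rho\coloneqq 1/n$; then $\rho n = 1\in\Z$ and $\rho\in(0,1)$ because $n = 2^\ell\ge 2$. The hypothesis of \cref{thm:mixing} reads $m > n\log(1/\rho) = n\log n$, which is exactly the assumption of the corollary. So \cref{thm:mixing} applies directly with $A$ and $\eps$ as given.

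The sets $B\subset\F_2^\ell$ with $|B| = 1$ are the singletons $\{y\}$. For these, $\Pr_{a\sim A}[h(a)\in\{y\}] = L_y(A,h)/m$. The event in \cref{thm:mixing} therefore says that for every $y$,
\[
\left|\frac{L_y(A,h)}{m} - \frac1n\right| \le \frac{5}{n}\sqrt{\frac{n\log n}{\eps m}}.
\]
Multiplying through by $m$ gives
\[
\left|L_y(A,h) - \frac mn\right| \le \frac{5m}{n}\sqrt{\frac{n\log n}{\eps m}} = 5\sqrt{\frac{m\log n}{\eps n}},
\]
which is the claimed bound. It holds simultaneously for all $y$ with probability at least $1-\eps$.

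There is no real obstacle here. The only points to check are that $\rho = 1/n$ is a legitimate parameter and that the algebraic simplification $\frac{m}{n}\sqrt{\frac{n}{m}} = \sqrt{\frac{m}{n}}$ is carried out correctly.
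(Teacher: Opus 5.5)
Your proposal is correct and matches the paper's proof: specialize \cref{thm:mixing} to $\rho = 1/n$, so the sets $B$ are singletons and the hypothesis becomes $m > n\log n$, then multiply the resulting bound $5\sqrt{\log n/(\eps m n)}$ by $m$. Your check that $\rho = 1/n$ lies in $(0,1)$ because $n \ge 2$ is a small detail the paper leaves implicit.
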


\begin{proof}
    The corollary follows by considering all singleton sets $B = \{b\}$ in \cref{thm:mixing}. With probability $\ge 1-\eps$ we have \[ \left|\Pr_{a\in A}[h(a) = b] - \frac{1}n\right|\le 5\sqrt{\frac{\log n}{\eps mn}}\] for all $b\in \F_2^\ell$. Multiplying both sides by $m$ gives \[ \left|L_b(A,h) - \frac{m}n\right| \le 5\sqrt\frac{ m\log n}{\eps n}\] for all $b\in \F_2^\ell$, giving the desired result.

\end{proof}

By a Chernoff bound followed by a union bound over all sets of density $\rho$, it is not hard to see that for constant $\eps$, \cref{thm:mixing} matches the performance of a fully random function. A formal lower bound can be obtained by considering a random set of size $m$ and using the probabilistic method. However, the deviation of a random map has a much better dependence on $\eps$, so it is natural to ask if the dependence on $\eps^{-1/2}$ is tight. We show that this is the case up to a log factor.

\begin{theorem}
\label{thm:mixing-tight}
Let $m\ge n\coloneqq 2^\ell$ be powers of two,
$\eps\in(\frac{n}{88m},\frac{1}{44})$, $\rho\in (0,1/2]\cap \frac{1}n \Z$. For every sufficiently large $u$ (in terms of
$m,n,\eps,\rho$), there exists $A\subset \F_2^u$ of cardinality
$m$ such that for uniformly random linear
$h: \F_2^u\to\F_2^\ell$, we have
\[
\Pr_h\left[
\exists B\subset\F_2^\ell\text{ s.t. } |B|=\rho n,\ 
\left|\Pr_{a\sim A}[h(a)\in B]-\rho\right|
\ge \frac{\rho}{4}\sqrt{\frac{n}{33\eps m}}
\right]\ge \eps.
\]
\end{theorem}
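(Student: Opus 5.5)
The plan is to build $A$ as a union of linearly independent shifts of a single subspace $W$. With probability about $\eps$, the random map $h$ will collapse $W$ onto a hyperplane of $\F_2^\ell$. On that event every coset of $W$ lands entirely inside one of the two cosets of that hyperplane, so $A$ behaves like a few very heavy "superballs". Their fluctuation is larger by a factor of $\eps^{-1/2}$ than for independent balls.

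\textbf{Construction.} Choose an integer $s\ge 0$ with $2^{s}\approx 1/(c\eps)$ for a suitable absolute constant $c$, and let $d\coloneqq \ell+s$. Set $k\coloneqq m/2^{d}$. Since $m,n$ are powers of two, $k$ is an integer, and it is at least $1$ because $\eps>n/(88m)$ forces $2^s n\le m$ for the right choice of $c$. Take $u\ge d+k$ (plus slack). Let $W\le\F_2^u$ have dimension $d$, pick $x_1,\dots,x_k$ linearly independent modulo $W$, and put $A\coloneqq\bigcup_{i}(x_i+W)$, which has cardinality $k2^d=m$.

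\textbf{Step 1: the collapse event.} Let $\cE$ be the event that $h|_W$ has rank exactly $\ell-1$, so $h(W)=H$ for some hyperplane $H$. I will lower bound $\Pr[\cE]$ by inclusion--exclusion over the $2^\ell-1$ hyperplanes $H$:
\begin{itemize}
\item for a fixed $H$, $\Pr[h(W)\subseteq H]=2^{-d}=2^{-s}/n$;
\item for two distinct hyperplanes, the probability is $2^{-2d}$;
\item the event $\mathrm{rank}\le \ell-2$ is negligible.
\end{itemize}
Together these give $\Pr[\cE]\ge 2^{-s}(1-O(2^{-s}))-O(2^{-2s})\gtrsim 2^{-s}$. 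This is of order $\eps$ once $c$ is chosen appropriately.

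\textbf{Step 2: conditional structure.} Because the $x_i$ are independent of $W$, conditioned on $h|_W$ the values $h(x_1),\dots,h(x_k)$ are i.i.d. uniform in $\F_2^\ell$. On $\cE$, each coset $x_i+W$ maps onto $h(x_i)+H$ and puts exactly $2^d/(n/2)=2^{s+1}$ balls on every point of that coset. Let $N$ be the number of $i$ with $h(x_i)\in H$; then $N\sim\mathrm{Bin}(k,1/2)$. Every point of $H$ has load $2^{s+1}N$, and every point of $\F_2^\ell\setminus H$ has load $2^{s+1}(k-N)$.

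Since $\rho\le 1/2$, I can take $B$ to be any $\rho n$ points inside whichever of $H$ or its complement is heavier. Then
\[
\left|\Pr_{a\sim A}[h(a)\in B]-\rho\right|=\frac{\rho n\,2^{s+1}}{m}\,\bigl|N-k/2\bigr|=\frac{2\rho}{k}\bigl|N-k/2\bigr|.
\]

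\textbf{Step 3: anticoncentration.} I need $\Pr[|N-k/2|\ge \sqrt{k}/4]\ge$ an absolute constant $\beta$. For $k=1$ this holds deterministically, since $|N-1/2|=1/2$. For general $k$ I will use a standard binomial anticoncentration bound, for example Paley--Zygmund applied to $(N-k/2)^2$, whose second and fourth moments are $k/4$ and at most $3k^2/16$. On this event,
\[
\frac{2\rho}{k}\cdot\frac{\sqrt k}{4}=\frac{\rho}{2\sqrt{k}}=\frac{\rho}{2}\sqrt{\frac{2^s n}{m}},
\]
which is at least $\frac{\rho}{4}\sqrt{\frac{n}{33\eps m}}$ provided $2^s\ge 1/(132\eps)$.

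The main obstacle is bookkeeping the constants. The integer $s$ must simultaneously make $\Pr[\cE]\cdot\beta\ge\eps$, which needs $2^s$ at most about $\beta/\eps$, and make the deviation large enough, which needs $2^s\gtrsim 1/\eps$. The two requirements have to be reconciled with rounding $2^s$ to a power of two, losing a factor of $2$. This is presumably where the specific constants $33$, $44$, and $88$ come from. I would first fix $\beta$, then choose $s$ as the largest integer with $2^s\le\beta/(2\eps)$ and check both inequalities, adjusting the threshold in Step 3 if needed.
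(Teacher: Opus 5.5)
Your plan is essentially the paper's proof. In both, $A$ is $k$ independent translates of a subspace, and one conditions on $h$ collapsing that subspace onto a hyperplane. The translates then act as $k$ fair coins choosing between the hyperplane and its complement, and binomial anticoncentration gives a deviation of order $\rho/\sqrt{k}$ with $k=\Theta(\eps m/n)$. The local differences are harmless:
\begin{itemize}
\item The paper bounds the collapse probability directly by $2^{-s-2}$: the first $\ell-1$ rows of the matrix of $h|_W$ are independent and the last row lies in their span. You use inclusion--exclusion instead.
\item The paper uses a one-sided fourth-moment bound, $\Pr[N-k/2\ge \sqrt{k}/(2\sqrt3)]\ge 1/11$, with $B$ inside $H_1$. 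You use a two-sided Paley--Zygmund bound with $\beta\ge 3/16$ and put $B$ in the heavier coset.
\end{itemize}

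However, the way you propose to finish the bookkeeping does not work, and here the constants are the whole content of the statement. Taking $s$ maximal with $2^s\le \beta/(2\eps)$ ignores the constraint $2^s\le m/n$, which you need for $k$ to be a positive integer. For example, take $m=n$, $\beta=3/16$, and $\eps$ close to $1/44$, all allowed by the hypotheses. Then your rule gives $2^s=4>m/n$, so $k=1/4$.

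Your Step 1 estimate is also off. Its leading term is $(1-2^{-\ell})2^{-s}$, not $2^{-s}$. Moreover, $s=0$ genuinely occurs (it is forced whenever $m=n$), so the second-order terms are not negligible.

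Both problems are fixed as follows.
\begin{itemize}
\item Choice of $s$: let $2^s$ be the smallest power of two exceeding $1/(88\eps)$. This is equivalent to the paper's choice of $k$ as the smallest power of two $\ge 44\eps m/n$. Since $1/(88\eps)>1/2$ and $m/n$ is a power of two exceeding $1/(88\eps)$, you get $1\le 2^s\le \min\{m/n,\,1/(44\eps)\}$.
\item Collapse probability: carried out exactly, your inclusion--exclusion gives $\Pr[\cE]\ge (1-2^{-\ell})2^{-s}\bigl(1-\tfrac43(2^{\ell-1}-1)2^{-\ell-s}\bigr)\ge 2^{-s}/6$ for every $s\ge 0$. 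This uses $(2^\ell-1)(2^{\ell-1}-1)$ pairs of hyperplanes and $(2^\ell-1)(2^{\ell-1}-1)/3$ codimension-two subspaces.
\item Probability of the bad event: $\Pr[\cE]\cdot\tfrac{3}{16}\ge 2^{-s}/32\ge 44\eps/32>\eps$.
\item Deviation: $\frac{\rho}{2}\sqrt{2^s n/m}>\frac{\rho}{2}\sqrt{n/(88\eps m)}\ge\frac{\rho}{4}\sqrt{n/(33\eps m)}$.
\end{itemize}
So the stated constants are met with room to spare.
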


\begin{proof}
   We will take $A$ to be $k$ linearly independent translates of one vector space of size $m/k$, where $1\le k\le m/n$ is a power of two to be chosen later. More formally, let $v\coloneqq \log_2 (m/k)$ and $V = \F_2^{v}$. Since $m\ge kn$, we have $v\ge \ell$. For sufficiently large $u$, we can write $\F_2^u = \F_2^{u-v}\oplus V$ and take $e_1,\dots,e_k$ to be distinct standard basis vectors of $\F_2^{u-v}$. Let $A = \bigcup_{i=1}^k (V+e_i)$. These translates are disjoint, so $|A|=m$. We will show that with probability at least $\frac{nk}{4m}$, $h(V)$ has dimension $\ell - 1$. Conditioned on this, we show with constant probability that $\left|\Pr_{a\sim A}[h(a)\in B] - \rho\right| \ge \Omega\left(\frac{\rho}{\sqrt{k}}\right)$. Setting $k = \Theta(\eps m/n)$ then yields the desired result.

   The probability $h(V)$ has dimension $\ell-1$ is equal to the probability a random element of $\F_2^{\ell \times v}$ has rank $\ell - 1$. This is lower bounded by the probability the first $\ell-1$ rows are linearly independent, and the last row is in the span of the previous ones. By iteratively looking at the probability the next vector is good given previous ones, we get the lower bound \[(1-2^{-v})(1-2^{-v + 1})\dots (1-2^{-v + \ell-2}) 2^{-v + \ell - 1} \ge 2^{-v + \ell - 1}\left(1 - \sum_{i=0}^{\ell-2} 2^{-v + i}\right)\ge 2^{-v + \ell-2}.\] 

   Conditioning on this event, we denote the hyperplane $H_0 \coloneqq h(V)$. Let $H_1$ be the one other coset of $H_0$ (so that $\F_2^\ell = H_0 \sqcup H_1$). For each $i$, $h(V+e_i) = H_0 + h(e_i) = H_{c_i}$ for some $c_i\in \{0,1\}$, depending on whether $h(e_i)\in H_0$ or not. Furthermore, $h$ will hash $V+e_i$ uniformly into $H_{c_i}$. This is seen by noting $h|_V$ is a linear map mapping $V$ uniformly into $H_0$. Since $h(V+e_i) = h(V) + h(e_i)$, and affine shifts do not change the load distribution of a linear map, it follows $h$ hashes $V+e_i$ evenly into $H_{c_i}$. 
   
   Picking uniform $a\sim A$ is equivalent to picking uniform $i\sim [k]$, and then picking uniform $a\sim (V+e_i)$. Hence, for fixed $h$, picking $h(a)$ for $a\sim A$ can be thought of as picking uniform $i\sim [k]$, and then picking uniform $y\sim H_{c_i}$. Equivalently, this is the same as randomly picking $c\in \{0,1\}$ where $c=1$ with probability $\frac{1}k\sum_{i=1}^k c_i$, and then picking uniformly from $H_c$. This latter perspective is what we will adopt.

   Since $e_1,\dots e_k$ are linearly independent of $V$, $h(e_1),\dots, h(e_k)$ are uniform and independent vectors in $\F_2^\ell$, even when conditioned on $h(V) = H_0$. Hence $\{c_i\}_{i\in [k]}$ are independent coin flips. By the fourth-moment method \cite[Corollary 3.7]{Berger97} and symmetry, we have $\frac{1}k\sum_{i\in [k]}c_i - \frac{1}2 \ge \frac{1}{2\sqrt{3k}}$ with probability at least $1/11$. Condition on this event, which intuitively forces the random variable $c$ to be biased towards 1.

   Let $B$ be an arbitrary subset of $H_1$ of size $\rho n \le n/2$. The probability $h(a)\in B$ for $a\sim A$ is equivalent to the probability $c=1$, multiplied by the probability a uniform element of $H_1$ is in $B$. The former probability is at least $\frac{1}2 + \frac{1}{2\sqrt{3k}}$ and the latter probability is $\frac{\rho n}{n/2} = 2\rho$. Therefore, \[|\Pr_{a\sim A}[h(a)\in B]-\rho|\ge \left(\frac{1}2 + \frac{1}{2\sqrt{3k}}\right) 2\rho - \rho = \frac{\rho}{\sqrt{3k}}. \] 

   We now recap the events we conditioned on. With probability $\ge 2^{-v+ \ell-2} = \frac{nk}{4m}$, $h(V)$ is of codimension 1. Conditioning on this, with probability at least $1/11$ we have $\frac{1}k\sum_{i\in [k]}c_i\ge \frac{1}2+\frac{1}{2\sqrt{3k}}$, from which it follows that there exists $B$ with density $\rho$ such that $|\Pr_{a\sim A}[h(a)\in B]-\rho|\ge \frac{\rho}{\sqrt{3k}}$. Hence, we have shown that for our constructed $A$, \[\Pr_h\left[\exists B\subset\F_2^\ell\text{ s.t. } |B|=\rho n, \left|\Pr_{a\sim A}[h(a)\in B] - \rho\right| \ge \frac{\rho}{\sqrt{3k}}\right]\ge  \frac{nk}{44m}.\] Set $k$ to be the smallest power of two not less than $\frac{44\eps m}n$, making $\frac{44\eps m}n\le k< \frac{88\eps m}n$. Thus $\frac{nk}{44m}\ge\eps$ and
   \[\frac{\rho}{\sqrt{3k}}\ge\rho\sqrt{\frac{n}{264\eps m}}\ge\frac{\rho}{4}\sqrt{\frac{n}{33\eps m}}.\] Since $\eps>\frac{n}{88m}$, $44\eps<1$, and $\frac{m}n$ is a power of two, we also have $1\le k\le\frac{m}n$,
   giving the desired result.
\end{proof}

\section*{Acknowledgements}
We thank Nader H. Bshouty, who observed that our initial bound of $3/2+o(1)$ in \cref{thm:headline} could be improved to the oscillatory bound presented. We thank David Zuckerman for discussions and directing us to the textbook of Boucheron-Lugosi-Massart \cite{blm13}. We thank Venkatesan Guruswami for asking about more general properties of bins that potentials can handle, which spurred us to consider top-$r$ averages. We also thank Dean Doron and Jesse Goodman for various discussions. We thank Geoffrey Mon and Amit Rajaraman for advice regarding the presentation of this work.

\subsection*{AI Methodology}

All mathematical ideas and writing are due to the author. Following the completion of an initial draft, the author used ChatGPT 6 to verify proofs, copyedit prose, and help interpret the results in Lund's manuscript~\cite{lund}. 

\bibliographystyle{alphaurl} \bibliography{references}

\end{document}